\providecommand{\algorithmname}{Algorithm}
  \theoremstyle{plain}
  \newtheorem{lem}{\protect\lemmaname}
\DeclareMathOperator{\tr}{tr}
\DeclareMathOperator{\D}{\mathsf{D}}
\DeclareMathOperator{\myvec}{vec}
\DeclareMathOperator*{\st}{subject~to}
\DeclareMathOperator*{\maxi}{maximize}
\DeclareMathOperator*{\mini}{minimize}
\DeclareMathOperator{\blkdiag}{blkdiag}
\newcommand{\herm}{^{\mbox{\scriptsize H}}}
\newcommand{\trans}{^{\mbox{\scriptsize T}}}
\renewcommand{\Im}{\mathrm{Im}}
\renewcommand{\Re}{\mathrm{Re}}
  \providecommand{\lemmaname}{Lemma}
\begin{document}

\title{Conic Quadratic Formulations for Wireless Communications Design }

\author{Quang-Doanh Vu, Markku Juntti,~\IEEEmembership{Senior Member, IEEE},
Een-Kee Hong,~\IEEEmembership{Senior Member, IEEE}, and Le-Nam Tran,~\IEEEmembership{Member, IEEE}\thanks{This work was supported in part by the Academy of Finland under project
Message and CSI Sharing for Cellular Interference Management with
Backhaul Constraints (MESIC) belonging to the WiFIUS program with
NSF, and in part by a research grant from Science Foundation Ireland
(SFI) and is co-funded under the European Regional Development Fund
under Grant Number 13/RC/2077.}\thanks{Quang-Doanh Vu, and Markku Juntti are with Centre for Wireless Communications,
University of Oulu, FI-90014, Finland. Email: \{quang.vu, markku.juntti\}@oulu.fi.}\thanks{Een-Kee Hong is with the School of Electronics and Information, Kyung
Hee University, South Korea. Email: ekhong@khu.ac.kr.}\thanks{Le-Nam Tran is with the Department of Electronic Engineering, Maynooth
University, Maynooth, Co. Kildare, Ireland. Email: lenam.tran@nuim.ie.}}
\maketitle
\begin{abstract}
As a wide class of resource management problems in wireless communications
are nonconvex and even NP-hard in many cases, finding globally optimal
solutions to these problems is of little practical interest. Towards
more pragmatic approaches, there is a rich literature on iterative
methods aiming at finding a solution satisfying necessary optimality
conditions to these problems. These approaches have been derived under
several similar mathematical frameworks such as inner approximation
algorithm, concave-convex procedure, majorization-minimization algorithm,
and successive convex approximation (SCA). However, a large portion
of existing algorithms arrive at a relatively generic program at each
iteration, which is less computationally efficient compared to a more
standard convex formulation. This paper proposes \emph{numerically
efficient }transformations and approximations for SCA-based methods
to deal with nonconvexity in wireless communications design. More
specifically, the central goal is to show that various nonconvex problems
in wireless communications can be iteratively solved by conic quadratic
optimization. We revisit various examples to demonstrate the advantages
of the proposed approximations. Theoretical complexity analysis and
numerical results show the superior efficiency in terms of computational
cost of our proposed solutions compared to the existing ones.
\end{abstract}

\begin{IEEEkeywords}
Resource management, sequential convex programming, second-order cone
programming, multiuser multi-antenna communications, reduced complexity,
large-scale systems.
\end{IEEEkeywords}

\section{Introduction}

The exponential increase in the number of portable devices which have
powerful multimedia capabilities (e.g.\ smartphones) has given rise
to tremendous demand on wireless data traffic. For example, Cisco\textquoteright s
projections of global mobile data traffic predicts smartphone will
reach three-quarters of mobile data traffic by 2019 and global mobile
data traffic will increase nearly tenfold between 2014 to 2019 \cite{ciscowhitepaper}.
In addition, more and more connected devices and the scarcity of bandwidth
make the coordination of multiuser interference highly important while
complicated. The evolution of wireless networks also posts other challenges
including resource costs, environmental impact, security, fairness
between subcribers, etc.\ \cite{samsung5g,docomo5g}.

Over the years, advanced optimization techniques have been widely
used and become vital tools for wireless communications design \cite{convexopt_WYu_2006,convexbeam2010,mimoframPanoma2003,Emil_multiobjective_5G}.
Representative examples include semidefinite relaxation (SDR) technique
\cite{SDrelaxation_Tlou2010}, dual decomposition and alternating
direction method of multipliers (ADMM) \cite{decomposition,DADMM},
robust optimization\cite{robustpaloma2006}, to name but a few. In
general, the first choice of solving a resource management problem
is to represent (or equivalently reformulate) it in a form of a convex
program, if possible. A good example in this regard is the problem
of minimizing the transmit power at the transmitter while the quality-of-service
(QoS) of each individual receiver is guaranteed \cite{SOCP_linear_precoding2006}.
This power minimization problem can be solved optimally by transforming
the original nonconvex problem into an equivalent second-order cone
program (SOCP). Unfortunately, such efficient convex reformulation
is impossible for many other design problems, e.g., weighted sum rate
maximization \cite{tomLuocomplexity2008,optimalWSRMISO}, energy efficiency
maximization for multiuser systems \cite{EE_oskari}, full-duplex
communications \cite{smallcell_Dan}, relay networks \cite{Jorswieck:relay2016},
etc. Generally, finding a globally optimal solution to these nonconvex
problems is difficult and, more importantly, not practically appealing.
Consequently, low-complexity suboptimal approaches are of particular
interest.

Among suboptimal solutions in the literature, SDR and successive convex
approximation (SCA) techniques are the two that have been extensively
used to tackle the nonconvexity in various wireless communication
problems \cite{SDrelaxation_Tlou2010,Nam:WSRMISO:2012,feasibleSAC,scutarisca2}.
Basically, instead of dealing with a design parameter, say $\mathbf{x}$,
the SDR method defines a positive semidefinite (PSD) matrix $\mathbf{X}\triangleq\mathbf{x}\mathbf{x}\herm$
and then lifts the design problem into the PSD domain. By omitting
the rank-1 constraint on $\mathbf{X}$, we can arrive at a semidefinite
program (SDP). For some special cases, a SDR-based solution can yield
exact optimal solutions, see \cite{SDrelaxation_Tlou2010} and references
therein. In other cases, randomization techniques are required to
produce high-performance feasible solutions. The major disadvantage
of the SDR is that the computational complexity of the resulting SDP
increases quickly with the problem size. Moreover, randomization techniques
might not always provide feasible points, and thus the SDR even fails
to obtain a feasible solution. The two issues were investigated in
\cite{Anh:Cogni:2012,feasibleSAC,Conic_quad}.

This paper is centered on a class of suboptimal solutions which are
based on SCA. In fact, SCA is a general term referred to similar algorithms
such as inner approximation algorithm \cite{Inner_Approximation1978},
concave-convex procedure \cite{concave:convexprod2003}, majorization-minimization
algorithm \cite{MajorM2003}, or difference of convex (DC) algorithm
\cite{DCbook}. Essentially, the idea of a SCA-based approach is to
\emph{safely and iteratively} approximate the nonconvex feasible set
(and/or nonconvex objective) of a nonconvex problem by a convex one
\cite{ABeck:SCAmath:2010,Inner_Approximation1978,concave:convexprod2003,scutariscaI}.
By a proper approximation, SCA is provably monotonically convergent
to a stationary solution to the original nonconvex problem. Compared
to SDR, SCA is more flexible and can be applied to a broader range
of applications. However, since an SCA-based solution requires to
solve a sequence of convex programs, the type of convex programs significantly
affects the computation time of the SCA-based method. In other words,
the overall complexity of SCA-based solutions strongly depends on
that of the convex program arrived at each iteration.

In general, a convex problem can be efficiently solved, i.e., in polynomial
time, by interior-point methods. However, the exponents of the polynomial
vary significantly according to the \emph{structure} of the convex
program \cite{lecture_on_modernCO}. For example, solving a linear
program is much more efficient than solving an SOCP in terms of both
complexity and stability \cite[Chap. 6]{lecture_on_modernCO}. In
the same way, SOCP's are much more computationally efficient than
SDP's. Thus one would consider an SOCP instead of an equivalent SDP,
if possible, and many examples are given in \cite{socp_Alizadeh,SOCApp_boyd1999}.

Motivated by the above discussions, we propose in this paper novel
transformations and approximations particularly useful for wireless
communications design where the problems of interest are nonconvex.
Different from many seminal papers above in which general algorithmic
frameworks for SCA are the main focus, we are interested in providing
formulations that allow for solving a wide variety of problems by
conic quadratic programming. The choice of conic quadratic programming
(CQP) is affected by the fact that linear programming (LP) is nearly
impossible, as far as beamforming techniques in multiple antennas
systems are concerned. Our contributions include the following:
\begin{itemize}
\item We first present a comprehensive review on the general framework of
SCA. A possible method of finding a feasible point to start the SCA
is also described. Then, we identify a class of common nonconvex constraints
in wireless communications design and propose transformations and
approximations to convert them to conic quadratic constraints.
\item In the second part of the paper, we demonstrate the efficiency and
flexibility of the proposed formulations in dealing with various resource
management problems in wireless communications. These include physical
layer secure transmission, relay communications, cognitive radio,
multicarrier management with different design criteria such as rate
maximization, transmit power minimization, rate fairness, and energy
efficiency fairness. Analytical and numerical results are provided
to show the superior performance of the proposed solutions, compared
to the existing ones.
\end{itemize}
The rest of the paper is organized as follows. Section \ref{sec:Preliminaries}
provides the preliminaries of the SCA. Section \ref{sec:Second-order-Cone-Programming}
presents the proposed low-complexity formulations. Section \ref{sec:SecureAF}
illustrates an example of using the proposed approach for the scenario
of secrecy relay transmissions. The second application about cognitive
radio is studied in Section \ref{sec:Application-II:-Cognitive}.
Section \ref{sec:Application-III:-MIMO} shows how to particularize
the proposed method to the case of MIMO relay communications. In Section
\ref{sec:Application-IV:-Power}, we consider the scenario of multiuser
multicarrier system with two problems: weighted sum rate maximization
and max-min energy efficiency fairness. Finally, Section \ref{sec:Conclusion}
concludes the paper.

\emph{Notation}: Standard notations are used in this paper. Bold lower
and upper case letters represent vectors and matrices, respectively;
$\left\Vert \cdot\right\Vert _{2}$ represents the $\ell_{2}$ norm;
$\left|\cdot\right|$ represents the absolute value; $\cdot^{\ast}$
represents the complex conjugate; $\mathbb{R}^{m\times n}$ and $\mathbb{C}^{m\times n}$,
represent the space of real and complex matrices of dimensions given
in superscript, respectively; $\mathbf{I}_{n}$ denotes the $n\times n$
identity matrix; $\mathcal{CN}(0,c\mathbf{I})$ denotes a complex
Gaussian random vector with zero mean and variance matrix $c\mathbf{I}$;
$\mathrm{\Re}(\cdot)$ and $\Im(\cdot)$ represents real and image
parts of the argument; $\mathbf{X}\herm$ and $\mathbf{X}\trans$
are Hermitian and normal transpose of $\mathbf{X}$, respectively;
$\myvec(\mathbf{X})$ is the vectorization operation that converts
the matrix $\mathbf{X}$ into a column vector; $\mathrm{trace}(\mathbf{X})$
is the trace of $\mathbf{X}$; $\mathbf{X}\otimes\mathbf{Y}$ denotes
Kronecker product. For ease of description, we also use \textquotedblleft MATLAB
notation\textquotedblright{} throughout the paper. Specifically, when
$\mathbf{X}_{1}$, ..., $\mathbf{X}_{k}$ are matrices with the same
number of rows, $[\mathbf{X}_{1},...,\mathbf{X}_{k}]$ denotes the
matrix with the same number of rows obtained by staking horizontally
$\mathbf{X}_{1}$, ..., and $\mathbf{X}_{k}$. When $\mathbf{X}_{1}$,
..., $\mathbf{X}_{k}$ are matrices with the same number of columns,$[\mathbf{X}_{1};...;\mathbf{X}_{k}]$
stands for the matrix with the same number of columns obtained by
staking vertically $\mathbf{X}_{1}$, ..., and $\mathbf{X}_{k}$.
Other notations are defined at their first appearance.

\section{Preliminaries \label{sec:Preliminaries}}

\subsection{Successive Convex Approximation \label{subsec:Successive-Convex-Approximation}}

We first briefly review some preliminaries of SCA which are central
to the discussions presented in the rest of the work. The interested
reader is referred to \cite{ABeck:SCAmath:2010,Inner_Approximation1978,scutariscaI}
for further details. Note that the SCA presented in this paper can
include concave-convex procedure \cite{concave:convexprod2003}, majorization-minimization
(MM) algorithm \cite{MajorM2003,MMpalomar}, or DC algorithm \cite{DCbook}
as special cases.\footnote{More precisely, when the feasible set of the problem being considered
is convex, MM algorithms are different from SCA in the sense that
the surrogate function in MM framework can be nonconvex. The interested
reader is referred to \cite{MMpalomar} for a sharp comparison between
SCA and MM algorithms. } Let us consider a general nonconvex optimization problem of the following
form \begin{subequations}\label{eq:SCAoverview}
\begin{align}
\underset{\mathbf{x}\in\mathbb{C}^{n}}{\mini} & \;f_{0}(\mathbf{x})\\
\st & \;f_{i}(\mathbf{x})\leq0,\;i=1,\ldots,L_{1}\label{eq:scca_nonconvex1}\\
 & \;\tilde{f}_{j}(\mathbf{x})\leq0,\;j=1,\ldots,L_{2}\label{eq:sca_nonconvex}
\end{align}
\end{subequations}  where $f_{i}(\mathbf{x}):\mathbb{C}^{n}\rightarrow\mathbb{R}$,
$i=0,\ldots,L_{1}$ and $\tilde{f}_{j}(\mathbf{x}):\mathbb{C}^{n}\rightarrow\mathbb{R}$,
$j=1,\ldots,L_{2}$ are assumed to be convex and noncovex functions,
respectively. We also assume that all the functions in \eqref{eq:SCAoverview}
are continuously differentiable. Note that \eqref{eq:SCAoverview}
is in the complex domain as it appears naturally in wireless communications
design (the case of the real domain or mixed real-complex domain will
be elaborated later on). In \eqref{eq:SCAoverview}, $f_{0}(\mathbf{x})$
represents the cost function to be optimized, and $f_{i}(\mathbf{x})\leq0,\;i=1,\ldots,L_{1}$,
$\tilde{f}_{j}(\mathbf{x})\leq0\;j=1,\ldots,L_{2}$ are the constraints
related to, e.g., quality of service (minimum rate requirement), radio
resources (transmit power, system bandwidth, backhaul), etc. At first,
the assumption of convexity on $f_{0}(\mathbf{x})$ seems to be strong,
but as we will see below, the cost functions in many design problems
are originally convex. Moreover, by proper transformations, e.g.,
using the epigraph form, we can bring the nonconvexity of the objective
into the feasible set.\footnote{It is not difficult to see that the epigraph form also has the same
KKT points as \eqref{eq:SCAoverview}. We refer the interested reader
to \cite{scutariscaI,scutarisca2} for a direct way to deal with problems
with a nonconvex objective. Herein the assumption on the convexity
of $f_{0}(\mathbf{x})$ is mainly to simplify the general description
of SCA.}

The difficulty of solving \eqref{eq:SCAoverview} is obviously due
to the nonconvex constraints in \eqref{eq:sca_nonconvex}. An SCA-based
approach is an iterative procedure which tries to seek a stationary
solution (i.e.\ a Karush-Kuhn-Tucker (KKT) point) of \eqref{eq:SCAoverview}
by sequentially approximating the nonconvex feasible set by \emph{inner}
convex ones. To do so, the nonconvex functions $\tilde{f}_{j}(\mathbf{x}),\;\forall j$,
are replaced by their convex upper bounds. We denote by $\mathbf{x}^{(\theta)}$
the solution obtained in the $\theta$th iteration of the iterative
process. At the iteration $\theta$, let $F_{j}(\mathbf{x};\mathbf{y}^{(\theta)}):\mathbb{C}^{n}\rightarrow\mathbb{R}$
denote a convex upper estimate of $\tilde{f}_{j}(\mathbf{x})$ which
is continuous on $(\mathbf{x};\mathbf{y}^{(\theta)})$, where $\mathbf{y}^{(\theta)}$
is a \emph{fixed} feasible point depending on the solution of the
problem in the $(\theta-1)$th iteration. That is $\mathbf{y}^{(\theta)}=g(\mathbf{x}^{(\theta-1)})$,
where $g(\mathbf{x}):\mathbb{C}^{n}\rightarrow\mathbb{C}^{m}$ is
a continuous function \cite{ABeck:SCAmath:2010}. Then the problem
considered in the $\theta$th iteration is given by\begin{subequations}\label{eq:SCAoverviewappx}
\begin{align}
\underset{\mathbf{x}\in\mathbb{C}^{n}}{\mini} & \;f_{0}(\mathbf{x})\\
\st & \;f_{i}(\mathbf{x})\leq0,\;i=1,\ldots,L_{1}\label{eq:scaconstr1}\\
 & \;F_{j}(\mathbf{x};\mathbf{y}^{(\theta)})\leq0,\;j=1,\ldots,L_{2}.\label{eq:scaconstr2}
\end{align}
\end{subequations} Note that problem \eqref{eq:SCAoverviewappx}
produces an upper bound of \eqref{eq:SCAoverview} due to the replacement
of $\tilde{f}_{j}(\mathbf{x})$ by $F_{j}(\mathbf{x},\mathbf{y}^{(\theta)})$.
To guarantee that the sequence of objective value $\{f_{0}(\mathbf{x}^{(\theta)})\}_{\theta}$
is nonincreasing and a limit point of the iterates $\{\mathbf{x}^{(\theta)}\}_{\theta}$,
if converge, is a stationary solution of \eqref{eq:SCAoverview},
the mapping function $\mathbf{y}=g(\mathbf{x})$ must satisfy the
following properties\begin{subequations}\label{eq:SCA:condition}
\begin{eqnarray}
\tilde{f}_{j}(\mathbf{x}') & \leq & F_{j}(\mathbf{x}';\mathbf{y})\label{eq:scacondition1}\\
\tilde{f}_{j}(\mathbf{x}) & = & F_{j}(\mathbf{x};\mathbf{y})\label{eq:scacondition2}\\
\nabla_{\mathbf{x}^{\ast}}\tilde{f}_{j}(\mathbf{x}) & = & \nabla_{\mathbf{x}^{\ast}}F_{j}(\mathbf{x};\mathbf{y})\label{eq:scacondition3}
\end{eqnarray}
\end{subequations} for all $j$, where $\nabla_{\mathbf{x}^{\ast}}f()$
denotes the gradient of $f$ with respect to the complex conjugate
of $\mathbf{x}$ (we for brevity refer to this gradient as the term
`conjugate gradient' in the rest of the paper). We note that the gradient
in \eqref{eq:scacondition3} can be replaced by the directional derivative
in general. In this regard, the definition of stationarity can be
made more specific (cf. \cite{Razaviyayn2013,Pang2017} for more details).
The main steps of the SCA procedure are outlined in Algorithm \ref{alg:generalSCA}.
\begin{algorithm}
\caption{Generic framework of an SCA-based approach}
\label{alg:generalSCA} \begin{algorithmic}[1]

\renewcommand{\algorithmicrequire}{\textbf{Initialization:}}

\REQUIRE Set $\theta=0$ and choose a feasible point $\mathbf{x}^{(0)}$.

\REPEAT

\STATE Solve \eqref{eq:SCAoverviewappx} to obtain optimal solution
$\mathbf{x}^{(\theta)}$.

\STATE Update: $\mathbf{y}^{(\theta+1)}\coloneqq g(\mathbf{x}^{(\theta)})$
and $\theta:=\theta+1$. \label{alg:generalSCA:update}

\UNTIL {Convergence.} \renewcommand{\algorithmicrequire}{\textbf{Output:}}

\REQUIRE $\mathbf{x}^{(\theta)}$.

\end{algorithmic}
\end{algorithm}

\subsubsection*{The real and mixed real-complex domain}

Before going forward we remark that for the case when $\mathbf{x}$
is a real valued vector, the conjugate gradient of $f$ in \eqref{eq:scacondition3}
is to be replaced by the normal gradient. If $\mathbf{x}=[\mathbf{x}_{1};\mathbf{x}_{2}]$
where $\mathbf{x}_{1}\in\mathbb{C}^{n_{1}}$ and $\mathbf{x}_{2}\in\mathbb{R}^{n_{2}}$,
$n_{1}+n_{2}=n$, then $\nabla_{\mathbf{x}^{\ast}}f(\mathbf{x})$
is defined as $\nabla_{\mathbf{x}^{\ast}}f(\mathbf{x})\triangleq[\nabla_{\mathbf{x}_{1}^{\ast}}f(\mathbf{x});\nabla_{\mathbf{x}_{2}}f(\mathbf{x})]$.
Furthermore, we define the inner product of two vectors $\mathbf{x}$
and $\mathbf{y}$, denoted by $\left\langle \mathbf{x},\mathbf{y}\right\rangle $,
for several cases. If $\mathbf{x}$ and $\mathbf{y}$ are real valued,
then $\left\langle \mathbf{x},\mathbf{y}\right\rangle \triangleq\mathbf{x}\trans\mathbf{y}$.
On the other hand, when $\mathbf{x}$ and $\mathbf{y}$ are complex
valued, then $\left\langle \mathbf{x},\mathbf{y}\right\rangle \triangleq2\Re(\mathbf{x}\herm\mathbf{y})$.

\subsection{Initial Feasible Points for Algorithm \ref{alg:generalSCA}: A Relaxed
Algorithm}

We can see that Algorithm \ref{alg:generalSCA} requires an initial
feasible point $\mathbf{x}^{(0)}$ to start the iterative procedure
of the SCA. In some problems such as those where all $f_{i}$ and
$\tilde{f}_{j}$ are homogeneous, generating a feasible point can
be done easily via \emph{scaling/rescaling} operation. More specifically,
we can randomly generate a vector $\mathbf{x}^{(0)}\in\mathbb{R}^{n}$
and then multiply the vector by a proper scalar such that all the
constraints are satisfied. An example will be shown in Section \ref{sec:SecureAF}.
However, such simple manipulations cannot apply to more sophisticated
cases. To overcome this issue, we present a relaxed version of Algorithm
\ref{alg:generalSCA} which was used in \cite{Dinh:SCP:2011,feasibleSAC,doanh_spl_2015,Lipp:CCP:2016,advanceADC}.
Consider a relaxed problem of \eqref{eq:SCAoverviewappx} given by
\begin{subequations}\label{eq:SCAoverviewappx-1}
\begin{align}
\underset{\mathbf{x}\in\mathbb{C}^{n},q\geq0}{\mini} & \;f_{0}(\mathbf{x})+\lambda q\label{eq:scaapr_reg:obj}\\
\st & \;f_{i}(\mathbf{x})\leq q,\;i=1,\ldots,L_{1}\label{eq:scaapr_reg}\\
 & \;F_{j}(\mathbf{x},\mathbf{y}^{(\theta+1)})\leq q,\;j=1,\ldots,L_{2}\label{eq:scaapr_reg1}
\end{align}
\end{subequations}where $q$ is the newly introduced slack variables
and $\lambda$ is a positive parameter. The purpose of introducing
$q$ is to make \eqref{eq:SCAoverviewappx-1} feasible for any $\mathbf{x}^{(0)}\in\mathbb{C}^{n}$.
Indeed, given some $\mathbf{x}^{(0)}$, we can always find $q$ with
sufficiently large elements satisfying \eqref{eq:scaapr_reg} and
\eqref{eq:scaapr_reg1}. In addition, let $(\mathbf{x},q)$ be a feasible
point of \eqref{eq:SCAoverviewappx-1}. Then $\mathbf{x}$ is also
feasible for \eqref{eq:SCAoverview} if $q=0$. That is to say, successively
solving \eqref{eq:SCAoverviewappx-1} may produce a feasible solution
to \eqref{eq:SCAoverview} because $q$ is encouraged to be zero due
to the minimization of the objective in \eqref{eq:SCAoverviewappx-1}.
Moreover, if $f_{0}(\mathbf{x})$ is strongly convex and the Mangasaran-Fromovitz
constraint qualification is satisfied at any $\mathbf{x}'\in\mathbb{C}^{n}$
such that $\max(\{f_{i}(\mathbf{x}')\}_{i},\{\tilde{f}_{j}(\mathbf{x}')\})\geq0$,
then it guarantees that $q=0$ after a finite number of iterations
if $\lambda$ is larger than a finite lower bound which can be determined
based on the Lagrangian multipliers of \eqref{eq:SCAoverviewappx}
\cite{advanceADC}. In general this lower bound is difficult to find
analytically and the choice of $\lambda$ can be heuristic in practice.
For example, $\lambda$ can be set to be sufficiently large \cite{Dinh:SCP:2011},
or it can be increased after each iteration \cite{Dinh:SCP:2011,Lipp:CCP:2016}.
Note that these results still hold if we replace the variable $q$
by $q_{i}$ in \eqref{eq:scaapr_reg} for $i=1,\ldots,L_{1}$, by
$\tilde{q}_{j}$ in \eqref{eq:scaapr_reg1} for $j=1,\ldots,L_{2}$,
and the term $\lambda q$ by $\lambda\bigl(\sum_{i=1}^{L_{1}}q_{i}+\sum_{j=1}^{L_{2}}\tilde{q}_{j}\bigr)$
in \eqref{eq:scaapr_reg:obj} \cite{advanceADC}. We summarize this
relaxed procedure in Algorithm \ref{alg:generate-initial-SCA}.

\begin{algorithm}
\caption{A general initialization for Algorithm \ref{alg:generalSCA}}
\label{alg:generate-initial-SCA} \begin{algorithmic}[1]

\renewcommand{\algorithmicrequire}{\textbf{Initialization:}}

\REQUIRE Randomly generate $\mathbf{x}^{(0)}\in\mathbb{C}^{n}$.

\REPEAT

\STATE Solve \eqref{eq:SCAoverviewappx-1} to obtain optimal solution
$(\mathbf{x}^{(\theta)},q^{(\theta)})$.

\STATE Update $\mathbf{y}^{(\theta+1)}\coloneqq g(\mathbf{x}^{(\theta)})$
and $\theta:=\theta+1$.

\UNTIL {$q^{(\theta)}=0$} \renewcommand{\algorithmicrequire}{\textbf{Output:}}

\REQUIRE$\mathbf{x}^{(\theta)}$.

\end{algorithmic}
\end{algorithm}

\subsection{Convergence Results }

There are several convergence results of the SCA \cite{Inner_Approximation1978,concave:convexprod2003,MajorM2003,ABeck:SCAmath:2010,Dinh:SCP:2011,scutari_dc_2014,DCbook,scutariscaI,thesissca},
among them \cite{Inner_Approximation1978,concave:convexprod2003,ABeck:SCAmath:2010,Dinh:SCP:2011,DCbook,scutariscaI,thesissca}
consider nonconvex constraints. We briefly summarize these results
herein for the sake of completeness. First, due to the use of upper
bounds in \eqref{eq:scaconstr2}, $\mathbf{x}^{(\theta+1)}$ is feasible
to the convexified subproblem \eqref{eq:SCAoverviewappx} at iteration
$k$. Thus $f_{0}(\mathbf{x}^{(\theta)})\geq f_{0}(\mathbf{x}^{(\theta+1)})$
and the sequence $\{f_{0}(\mathbf{x}^{(\theta)})\}_{\theta=1}^{\infty}$
is nonincreasing, possibly to negative infinity. If $f_{0}(\mathbf{x})$
is coercive on the feasible set or the feasible set is bounded, then
$\{f_{0}(\mathbf{x}^{(\theta)})\}_{\theta}$ converges to a finite
value. However, this does not necessarily imply the convergence of
the iterates $\{\mathbf{x}^{(\theta)}\}_{\theta=0}^{\infty}$ to a
stationary point or a local minimum in general.\footnote{If $\{\mathbf{x}^{(\theta)}\}_{\theta=0}^{\infty}$ converges, the
limit point is a stationary solution of \eqref{eq:SCAoverview} \cite[Th. 1]{ABeck:SCAmath:2010,thesissca}.}

To establish the convergence of the iterates, the key point is to
ensure that a strict descent can be obtained after each iteration,
i.e., $f_{0}(\mathbf{x}^{(\theta)})>f_{0}(\mathbf{x}^{(\theta+1)})$
for all $\theta$, unless $\mathbf{x}^{(\theta)}=\mathbf{x}^{(\theta+1)}$.
If $f_{0}(\mathbf{x})$ is strongly convex and the feasible set is
convex as assumed in \cite{ABeck:SCAmath:2010}, the strict descent
property and convergence of the iterates to a stationary point are
guaranteed. When strict convexity does not hold for $f_{0}(\mathbf{x})$,
the iterative process might not make the objective sequence strictly
decreasing. To overcome this issue, we can add a \emph{proximal term}
as done in \cite{Dinh:SCP:2011}. In particular, instead of \eqref{eq:SCAoverviewappx},
we consider a regularized problem of \eqref{eq:SCAoverviewappx} given
by \begin{subequations}\label{eq:SCAproximal}
\begin{align}
\underset{\mathbf{x}\in\mathbb{C}^{n}}{\mini} & \;f_{0}(\mathbf{x})+\alpha||\mathbf{x}-\mathbf{x}^{(\theta-1)}||_{2}^{2}\\
\st & \;\eqref{eq:scaconstr1},\eqref{eq:scaconstr2}\label{eq:scaapr_reg-1}
\end{align}
\end{subequations} where $\alpha>0$ is a regularization parameter.
By adding proximal term $\alpha||\mathbf{x}-\mathbf{x}^{(\theta)}||_{2}^{2}$,
$\{f_{0}(\mathbf{x}^{(\theta)})\}_{\theta}$ is strictly decreasing
and the sequence $\{\mathbf{x}^{(\theta)}\}_{\theta}$ converges to
a stationary point due to the following result $f_{0}(\mathbf{x}^{(\theta)})-f_{0}(\mathbf{x}^{(\theta+1)})\geq\alpha||\mathbf{x}^{(\theta+1)}-\mathbf{x}^{(\theta)}||_{2}^{2}$
\cite{Dinh:SCP:2011}. In fact, the proximal term makes each subproblem
of SCA-based methods strongly convex, which is the main idea behind
the work of \cite{scutariscaI}. Note that parameter $\alpha$ should
not be large, otherwise, the algorithm converges slowly. In practice,
we should only consider \eqref{eq:SCAproximal} if a strict descent
is not achieved at the current iteration \cite{Dinh:SCP:2011}. Stronger
convergence results for problems where the objective and constraints
are nonconvex were reported in \cite{scutariscaI}. Particularly,
if the approximation function of the nonconvex cost function is strongly
convex (but not necessarily a tight global upper bound) and some other
mild conditions are satisfied, the SCA procedure is guaranteed to
converge to a stationary point with appropriate rules of updating
operation points. Moreover, possibilities for distributed solutions
under the framework of SCA were also discussed in \cite[Section IV]{scutariscaI}.

\subsection{Desired Properties of Approximation Functions\label{subsec:Choice-of-Approximation}}

The main point of an SCA-based method is to find a convex upper approximation
for a nonconvex function $\tilde{f}(\mathbf{x})$ that satisfies the
three conditions in \eqref{eq:SCA:condition}. There are in fact several
ways to do this. If $\tilde{f}(\mathbf{x})$ is concave, constraint
$\tilde{f}(\mathbf{x})\leq0$ is called a reverse one and a convex
upper bound of $\tilde{f}(\mathbf{x})$ can be easily found from the
first order Taylor series. In many cases, $\tilde{f}(\mathbf{x})$
is a DC function, i.e., $\tilde{f}(\mathbf{x})=h(\mathbf{x})-g(\mathbf{x})$
where both $h(\mathbf{x})$ and $g(\mathbf{x})$ are convex. In such
a case, a convex upper bound can be given by $F(\mathbf{x};\mathbf{y})=h(\mathbf{x})-g(\mathbf{y})-\left\langle \nabla_{\mathbf{x}^{\ast}}g(\mathbf{y}),\mathbf{x}-\mathbf{y}\right\rangle $,
which is usually done in the context of the convex-concave procedure
\cite{concave:convexprod2003,Lipp:CCP:2016}. Note that there are
infinitely many DC decompositions for $\tilde{f}(\mathbf{x})$, e.g,
by adding a quadratic term to both $h(\mathbf{x})$ and $g(\mathbf{x})$.
In particular, if the gradient of $\tilde{f}(\mathbf{x})$ is $L$-Lipschitz
continuous, a surrogate function is given by $F(\mathbf{x};\mathbf{y})=\tilde{f}(\mathbf{y})+\left\langle \nabla_{\mathbf{x}}\tilde{f}(\mathbf{y}),\mathbf{x}-\mathbf{y}\right\rangle +\frac{L}{2}||\mathbf{x}-\mathbf{y}||_{2}^{2}$
which is a convex quadratic function. In some problems, the range
of $\mathbf{x}$ may be useful to find a convex upper bound of $\tilde{f}(\mathbf{x})$.

From the above discussion, one would be interested in finding the
best approximation function for a given nonconvex function. However,
the solution to this problem is not unique as it is problem specific.
In general, a good approximation function will provide at least two
features: \emph{tightness} and \emph{numerical tractability}. The
tightness property is obvious as we want the approximate convex set
and the original nonconvex one are as close as possible. This has
a huge impact on the convergence rate of the iterative procedure.
The numerical tractability properties means that the chosen approximation
should yield a convex program that can be solved very efficiently,
e.g., by analytical solution. In case an analytical solution is impossible
to find, we may prefer to seek an approximation such that the convex
subproblem in (2) belongs to a class of convex programs for which
numerical methods are known to be more efficient. For example, an
SOCP is much easier to solve than a generic convex problem consisting
of a mix of SOC and exponential cone constraints. To see our arguments,
let us consider an exemplary problem $\min\{\tilde{f}(\mathbf{x})|\mathbf{A}\mathbf{x}\leq\mathbf{b},\mathbf{x}\in\mathbb{R}_{+}^{n}\}$
where $\tilde{f}(\mathbf{x})=\sum_{i=1}^{n}\log(\mathbf{u}_{i}^{T}\mathbf{x})-\sum_{i=1}^{n}\log(\mathbf{v}_{i}^{T}\mathbf{x})$
and $\mathbf{A}$, $\mathbf{b}$, $\mathbf{u}$, $\mathbf{v}$ are
the problem data. An easy and straightforward approximation of $f(\mathbf{x})$
would be by linearizing the term $\sum_{i=1}^{n}\log(\mathbf{v}_{i}^{T}\mathbf{x})$,
which was actually considered in \cite{Kha:dcprog:2012,maxminEE:SCA:TVT:2015}.
However, this results in a generic nonlinear program (NLP). By exploiting
the problem structure, we may find a positive $c$ such that $c||\mathbf{x}||_{2}^{2}-\tilde{f}(\mathbf{x})$
is convex, and thus $\tilde{f}(\mathbf{x})=c||\mathbf{x}||_{2}^{2}-(c||\mathbf{x}||_{2}^{2}-\tilde{f}(\mathbf{x}))$
is in a DC form. As a result, a convex quadratic program (QP) is obtained
if we linearize the term $c||\mathbf{x}||_{2}^{2}-\tilde{f}(\mathbf{x})$.
Obviously, a QP is easier to solve than a generic NLP in terms of
solution efficiency. An interesting example for this is provided in
Section \ref{sec:Application-IV:-Power}.

\section{Proposed Conic Quadratic Approximate Formulations \label{sec:Second-order-Cone-Programming}}

As discussed above, there are infinitely many approximations for a
given nonconvex function or a nonconvex constraint, which have crucial
impact on the convergence speed, numerical efficiency, etc.\ of SCA-based
algorithms. From the standpoint of numerical optimization methods,
it is probably best to arrive at a linear program for each subproblem
in a SCA-based method. Unfortunately, this is hard to achieve in many
wireless communications related problems, especially in view of beamforming
techniques for multiantenna systems. As a result, conic quadratic
optimization is a good choice due to its broad modeling capabilities
and computational stability and efficiency. In this section, we will
present some nonconvex constraints widely seen in wireless communications
design problems and introduce novel convex approximations to deal
with their nonconvexity.

The nonconvex constraints considered in this paper are given in a
general form as
\begin{equation}
l\leq\frac{h_{1}(\mathbf{x})}{h_{2}(\mathbf{x})}\leq u\label{eq:doubleside}
\end{equation}
where $h_{1}(\mathbf{x})$ and $h_{2}(\mathbf{x})$ are affine or
convex quadratic functions. We assume that $l>0$, $h_{1}(\mathbf{x})>0$,
and $h_{2}(\mathbf{x})>0$, which hold in numerous practical problems
in wireless communications. The upper and lower limits $u\in\mathbb{R}$
and $l\in\mathbb{R}$ may be constants or optimization variables,
depending on the specific problem. The cases where $u$ and/or $l$
are optimization variables mostly result from considering the epigraph
form of the original design problem.

\subsection{Case 1: $h_{1}(\mathbf{x})$ and $h_{2}(\mathbf{x})$ Are Affine}

We note that, when $h_{1}(\mathbf{x})$ and $h_{2}(\mathbf{x})$ are
affine, if $u$ or $l$ is a constant, the associated constraint becomes
a linear one, thus approximation is not needed. Here we are interested
in the case where both $u$ and $l$ are optimization variables. In
this case, $\mathbf{x}$ is a real-valued vector. This class of constraints
usually occurs in power control problems \cite{power_controlGP}.
To handle such nonconvex constraints, \cite{power_controlGP} applied
SCA so that the nonconvex problem is approximated as geometric programming
(GP). We now show that this constraint can be approximated as a conic
quadratic formulation.

Let us consider the constraint $\frac{h_{1}(\mathbf{x})}{h_{2}(\mathbf{x})}\geq l>0$
first, which is equivalent to $f(\tilde{\mathbf{x}})=lh_{2}(\mathbf{x})-h_{1}(\mathbf{x})\leq0$
with $\tilde{\mathbf{x}}=[\mathbf{x};l]$. A convex upper bound can
be found
\begin{equation}
f(\tilde{\mathbf{x}})\leq F(\tilde{\mathbf{x}};y)=\frac{y}{2}l^{2}+\frac{1}{2y}h_{2}^{2}(\mathbf{x})-h_{1}(\mathbf{x})\tag{App1}\label{eq:App1}
\end{equation}
and the SCA parameter update in Step \ref{alg:generalSCA:update}
of Algorithm \ref{alg:generalSCA} is $y=\frac{h_{2}(\mathbf{x})}{l}$.
Note that $F(\tilde{\mathbf{x}};y)$ is a convex quadratic function
for a given $y$ and a generalization of a result in \cite{ABeck:SCAmath:2010}.
The gradient of the upper bound function at some $\tilde{\mathbf{x}}'=[\mathbf{x}';l']$
is given as $\nabla_{\tilde{\mathbf{x}}}F(\tilde{\mathbf{x}}';y)=[\frac{h_{2}(\mathbf{x}')}{y}\nabla_{\mathbf{x}}h_{2}(\mathbf{x}')-\nabla_{\mathbf{x}}h_{1}(\mathbf{x}');yl']$
which reduces to $[l'\nabla_{\mathbf{x}}h_{2}(\mathbf{x}')-\nabla_{\mathbf{x}}h_{1}(\mathbf{x}');h_{2}(\mathbf{x}')]=\nabla_{\tilde{\mathbf{x}}}f(\tilde{\mathbf{x}}')$
when substituting $y$ by $\frac{h_{2}(\mathbf{x}')}{l'}$. That is,
the condition \eqref{eq:scacondition3} is satisfied by the bound
in \eqref{eq:App1}.

In another way, we can rewrite $f(\tilde{\mathbf{x}})$ in a DC form,
i.e.\ $f(\tilde{\mathbf{x}})=\frac{1}{4}\left(l+h_{2}(\mathbf{x})\right)^{2}-\underset{\bar{h}(\tilde{\mathbf{x}})}{\underbrace{{\textstyle \frac{1}{4}}\left(l-h_{2}(\mathbf{x})\right)^{2}}}-h_{1}(\mathbf{x})$,
and the convex upper estimate can be found as
\begin{equation}
F(\tilde{\mathbf{x}};\mathbf{y})=\frac{1}{4}\left(l+h_{2}(\mathbf{x})\right)^{2}-\bar{h}(\mathbf{y})-\left\langle \nabla_{\tilde{\mathbf{x}}}\bar{h}(\mathbf{y}),\tilde{\mathbf{x}}-\mathbf{y}\right\rangle -h_{1}(\mathbf{x})\tag{App2}\label{eq:App2}
\end{equation}
In this case the SCA parameter update is simply $\mathbf{y}=\tilde{\mathbf{x}}.$

The constraint $f(\tilde{\mathbf{x}})\triangleq\frac{h_{1}(\mathbf{x})}{h_{2}(\mathbf{x})}-u\leq0$,
$\tilde{\mathbf{x}}=[\mathbf{x};u]$, can be handled similarly. Specifically,
in light of \eqref{eq:App1}, a convex approximation of $f(x)$ can
be found as
\begin{equation}
f(\tilde{\mathbf{x}})\leq F(\tilde{\mathbf{x}};y)=\frac{y}{2}h_{1}^{2}(\mathbf{x})+\frac{1}{2y}h_{2}^{-2}(\mathbf{x})-u\tag{App3}\label{eq:App3}
\end{equation}
where the SCA parameter update is $y=\frac{1}{h_{1}(\mathbf{x})h_{2}(\mathbf{x})}$.
Following the same steps as for \eqref{eq:App1}, it is straightforward
to check that the bound in \eqref{eq:App3} satisfies the condition
\eqref{eq:scacondition3}. Note that $F(\tilde{\mathbf{x}};\mathbf{y})$
is not a quadratic function but the constraint $F(\tilde{\mathbf{x}};\mathbf{y})\leq0$
can be easily expressed by the following two SOC ones
\begin{eqnarray}
\frac{y}{2}h_{1}^{2}(\mathbf{x})+\frac{1}{2y}z^{2}-u & \leq & 0\\
1 & \leq & h_{2}(\mathbf{x})z\label{eq:rotatedcone}
\end{eqnarray}
Note that \eqref{eq:rotatedcone} is a rotated SOC constraint, since
$h_{2}(\mathbf{x})$ is affine. Alternatively, we can simply use the
approximations similar to \eqref{eq:App2}.

\subsection{Case 2: $h_{1}(\mathbf{x})$ and $h_{2}(\mathbf{x})$ Are Convex
Quadratic Functions}

We now turn our attention to the case where $h_{1}(\mathbf{x})$ and
$h_{2}(\mathbf{x})$ are convex quadratic functions. In wireless communications
this form of constraint occurs in the problems related to precoder
designs and $\mathbf{x}$ is a vector of complex variables. Let us
consider the constraint $\frac{h_{1}(\mathbf{x})}{h_{2}(\mathbf{x})}\geq l$
first, which is equivalent to $f(\tilde{\mathbf{x}})\triangleq h_{2}(\mathbf{x})-\frac{h_{1}(\mathbf{x})}{l}\leq0$,
$\tilde{\mathbf{x}}=[\mathbf{x};l]$. Note that the term $\tilde{h}(\tilde{\mathbf{x}})\triangleq\frac{h_{1}(\mathbf{x})}{l}$
is convex with respect to $\tilde{\mathbf{x}}$, and thus a convex
approximation of $f(\tilde{\mathbf{x}})$ is given by
\begin{equation}
f(\tilde{\mathbf{x}})\leq F(\tilde{\mathbf{x}};\mathbf{y})=h_{2}(\mathbf{x})-\tilde{h}(\tilde{\mathbf{x}})-\left\langle \nabla_{\tilde{\mathbf{x}}^{\ast}}\tilde{h}(\mathbf{y}),\tilde{\mathbf{x}}-\mathbf{y}\right\rangle \tag{App4}\label{eq:App4}
\end{equation}
where the SCA parameter update is taken as $\mathbf{y}=\tilde{\mathbf{x}}.$
Another approximation can be found by introducing a slack variable,
i.e.\ we have
\begin{align}
\frac{h_{1}(\mathbf{x})}{h_{2}(\mathbf{x})}\geq l\Leftrightarrow\left\{ \begin{array}{l}
lz-h_{1}(\mathbf{x})\leq0\\
h_{2}(\mathbf{x})\leq z
\end{array}\right.
\end{align}
The first constraint in the equivalent formulation can be rewritten
as $f(\hat{\mathbf{x}})\triangleq(l+z)^{2}-\underset{\bar{h}(\hat{\mathbf{x}})}{\underbrace{((l-z)^{2}+4h_{1}(\mathbf{x}))}}\leq0$
where $\hat{\mathbf{x}}=[\mathbf{x};l;z]$, and $\bar{h}(\hat{\mathbf{x}})$
is a quadratic function with respect to $\hat{\mathbf{x}}$. Then
the approximation can be given by
\begin{equation}
f(\hat{\mathbf{x}})\leq F(\hat{\mathbf{x}};\mathbf{y})=(l+z)^{2}-\bar{h}(\mathbf{y})-\left\langle \nabla_{\hat{\mathbf{x}}^{\ast}}\bar{h}(\mathbf{y}),\hat{\mathbf{x}}-\mathbf{y}\right\rangle \tag{App5}\label{eq:App5}
\end{equation}
The SCA parameter is updated as $\mathbf{y}=\hat{\mathbf{x}}$.

For the constraint $\frac{h_{1}(\mathbf{x})}{h_{2}(\mathbf{x})}\leq u$
we can equivalently write it as $\frac{h_{1}(\mathbf{x})}{u}-h_{2}(\mathbf{x})\leq0$,
and then a convex upper estimator is simply given by
\begin{equation}
F([\mathbf{x},u];\mathbf{y})=\frac{h_{1}(\mathbf{x})}{u}-h_{2}(\mathbf{y})-\left\langle \nabla_{\mathbf{x}^{\ast}}h_{2}(\mathbf{y}),\mathbf{x}-\mathbf{y}\right\rangle \tag{App6}\label{eq:app6}
\end{equation}
with the SCA parameter update is $\mathbf{y}=\mathbf{x}.$ This constraints
can also be approximated using the same approach as that in \eqref{eq:App5}.

To conclude this section we now show the conjugate gradient appearing
in \eqref{eq:App4}, \eqref{eq:App5} and \eqref{eq:app6}. Let us
consider the conjugate gradient of $\tilde{h}(\tilde{\mathbf{x}})=\frac{h_{1}(\mathbf{x})}{l}$.
As mentioned earlier, we can write $\nabla_{\tilde{\mathbf{x}}^{\ast}}\tilde{h}(\mathbf{y})=\left[\frac{\nabla_{\mathbf{x}^{\ast}}h_{1}(\mathbf{y}_{x})}{y_{l}};-\frac{h_{1}(\mathbf{y}_{x})}{y_{l}^{2}}\right]$,
where $\mathbf{y_{\mathbf{x}}}$ and $y_{l}$ are the elements of
$\mathbf{y}$ corresponding to $\mathbf{x}$ and $l$ respectively.
That means, it requires the conjugate gradient of a quadratic function.
Suppose $h_{1}(\mathbf{x})=\mathbf{x}\herm\mathbf{A}\mathbf{x}+2\Re(\mathbf{b}^{H}\mathbf{x})+c$
where $\mathbf{A}$ is a PSD matrix, $\mathbf{b}\in\mathbb{C}^{n}$,
and $c\in R$. Then $\nabla_{\mathbf{x}^{\ast}}h_{1}(\mathbf{y})=\mathbf{A}\mathbf{y}+\mathbf{b}$.
The conjugate gradients in \eqref{eq:App5} and \eqref{eq:app6} follow
immediately.

Regarding the use of the above proposed algorithms we have the following
remarks
\begin{itemize}
\item When applying to a specific problem, an approximation function may
be better than another. Thus, we can consider all applicable approximations
to choose the best one for on-line design.
\item Many problems in wireless communications may not naturally express
the design constraints in the forms written in this paper, for which
cases equivalent transformations are required. In doing so, the number
of newly introduced variables should be kept minimal. This issue will
be further elaborated by an example in Section \ref{subsec:WSRmaxmulticarr}.
\end{itemize}
In the following sections, we apply the above approximations to address
four specific problems, which are chosen to cover a wide range of
scenarios in wireless communications. However, we note that the proposed
approximations also find applications in other contexts not considered
herein. For the numerical experiments to follow, we use the modeling
language YALMIP \cite{YALMIP} with MOSEK \cite{Mosek} being the
inner solver for SOCP, SDP, and GP. The proposed iterative method
stops when the increase (or decrease) of the last 5 consecutive iterations
is less than $10^{-3}$. The average run time reported in all figures
takes into account the total number of iterations for the iterative
algorithm to converge.

\section{Application I: Secure Beamforming Designs for Amplify-and-Forward
Relay Networks \label{sec:SecureAF}}

In the first application, we revisit the problem of secure beamforming
for amplify-and-forward (AF) relay networks which was studied in\cite{Yang:SecureAF:2013}.

\subsection{System Model and Problem Formulation}

\begin{figure}
\centering{}\includegraphics[width=0.87\columnwidth]{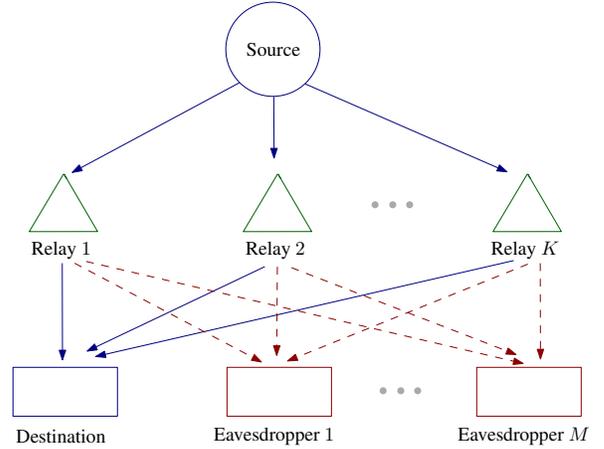}\caption{A diagram of secure AF relay networks with one source, one destination,
$K$ relays, and $M$ eavesdroppers.}
 \label{fig:AFsecuremodel}
\end{figure}

In the considered scenario, a source sends data to a destination through
the assistance of $K$ relays that operate in the AF mode. In addition,
there are $M$ eavesdroppers who want to intercept the information
intended for the destination. It is assumed that there is no direct
link between the source and the destination. The system model is illustrated
in Fig. \ref{fig:AFsecuremodel}. We adopt the notations used in \cite{Yang:SecureAF:2013}
for ease of discussion. Specifically, the channel between the source
and relay $k$, and that between relay $k$ and the destination are
denoted by $f_{k}$ and $g_{k}$, respectively. The channel between
relay $k$ and eavesdropper $m$ is denoted by $h_{km}$. Let $w_{k}$
be the complex weight used at relay $k$. For notational convenience,
the following vectors are defined: $\mathbf{g}\triangleq[g_{1},g_{2},\ldots,g_{K}]\herm\in\mathbb{C}^{K\times1}$,
$\mathbf{h}_{m}\triangleq[h_{1m},h_{2m},\ldots,h_{Km}]\herm\in\mathbb{C}^{K\times1}$,
$\mathbf{f}\triangleq[f_{1},f_{2},\ldots,f_{K}]\trans\in\mathbb{C}^{K\times1}$,
and $\mathbf{w}\triangleq[w_{1},w_{2},\ldots,w_{K}]\trans\in\mathbb{C}^{K\times1}$.
Let $\mathbf{n}_{\mathrm{r}}\sim\mathcal{CN}(0,\sigma^{2}\mathbf{I}_{K})$
be the noise vector at the relays. With the above notations, the signals
received at the destination and eavesdropper $m$ are
\begin{eqnarray}
y_{\mathrm{d}} & = & \sqrt{P_{\mathrm{s}}}\mathbf{g}\herm\D(\mathbf{f})\mathbf{w}s+\mathbf{n}_{\mathrm{r}}\trans\D\herm(\mathbf{g})\mathbf{w}+n_{\mathrm{d}}
\end{eqnarray}
and
\begin{equation}
y_{\mathrm{e},m}=\sqrt{P_{\mathrm{s}}}\mathbf{h}_{m}\herm\D(\mathbf{f})\mathbf{w}s+\mathbf{n}_{\mathrm{r}}\trans\D\herm(\mathbf{h}_{m})\mathbf{w}+n_{\mathrm{e},m}
\end{equation}
respectively, where $P_{\mathrm{s}}$ is the transmit power at the
source, $\D(\mathbf{x})$ denotes the diagonal matrix with elements
$\mathbf{x}$, $n_{\mathrm{d}}\sim\mathcal{CN}(0,\sigma^{2})$ and
$n_{\mathrm{e},m}\sim\mathcal{CN}(0,\sigma^{2})$ are the noise at
the destination and eavesdropper $m$, respectively. Then, the SINRs
at the destination and eavesdropper $m$ are given by
\begin{equation}
\gamma_{\mathrm{d}}=\frac{\mathbf{w}\herm\mathbf{A}\mathbf{w}}{\mathbf{w}\herm\mathbf{G}\mathbf{w}+1},\;\textrm{and}\ \gamma_{\mathrm{e},m}=\frac{\mathbf{w}\herm\mathbf{B}_{m}\mathbf{w}}{\mathbf{w}\herm\mathbf{H}_{m}\mathbf{w}+1}
\end{equation}
respectively, where $\mathbf{A}\triangleq(P_{\mathrm{s}}/\sigma^{2})\D\herm(\mathbf{f})\mathbf{gg}\herm\D(\mathbf{f})$,
$\mathbf{G}\triangleq\D(\mathbf{g})\D\herm(\mathbf{g})$, $\mathbf{B}_{m}\triangleq(P_{\mathrm{s}}/\sigma^{2})\D\herm(\mathbf{f})\mathbf{h}_{m}\mathbf{h}_{m}\herm\D(\mathbf{f})$,
and $\mathbf{H}_{m}\triangleq\D(\mathbf{h}_{m})\D\herm(\mathbf{h}_{m})$.
Now the problem of maximizing the secrecy rate reads \begin{subequations}\label{eq:secrecyrate}
\begin{align}
\underset{\mathbf{w}}{\maxi} & \;\underset{1\leq m\leq M}{\min}\left(\log(1+\gamma_{\mathrm{d}})-\log(1+\gamma_{\mathrm{e},m})\right)\\
\st & \;\mathbf{w}\herm\mathbf{C}\mathbf{w}\leq P_{\mathrm{tot}},\;\mathbf{w}\herm\D(\mathbf{e}_{k})\mathbf{C}\mathbf{w}\leq P_{k},\,\forall k\label{eq:powerconstrsecureaf}
\end{align}
\end{subequations} where $\mathbf{C}\triangleq P_{\mathrm{s}}\D\herm(\mathbf{f})\D(\mathbf{f})+\sigma^{2}\mathbf{I}_{K}$,
$\mathbf{e}_{k}\triangleq[\underset{k-1}{\underbrace{0,\ldots,0}},1,\underset{K-k}{\underbrace{0,\ldots,0}}]$,
$P_{\mathrm{tot}}$ is the maximum total transmit power for all the
relays, and $P_{k}$ is the maximum transmit power for relay $k$.

To solve \eqref{eq:secrecyrate}, \cite{Yang:SecureAF:2013} introduced
the PSD matrix $\mathbf{W}\triangleq\mathbf{w}\mathbf{w}\herm$ and
arrived at a relaxation of this problem where the rank-1 constraint
on $\mathbf{W}$ was dropped for tractability. Then the relaxed program
was solved by a method involving two-stage optimization; a one-dimensional
search was performed at the outer-stage and SDPs were solved at the
inner-stage.

\subsection{Proposed SOCP-based Solution}

We now solve \eqref{eq:secrecyrate} based on the proposed approximations
presented in the previous section. To do so, we first transform \eqref{eq:secrecyrate}
into an equivalent formulation as \begin{subequations}\label{eq:secrecyrate:reform1}
\begin{align}
\underset{\mathbf{w},\alpha>0,\beta>0}{\mini} & \;\frac{\beta}{\alpha}\\
\st & \;\frac{\mathbf{w}\herm(\mathbf{A}+\mathbf{G})\mathbf{w}+1}{\mathbf{w}\herm\mathbf{G}\mathbf{w}+1}\geq\alpha\label{eq:des:sinr}\\
 & \;\frac{\mathbf{w}\herm(\mathbf{B}_{m}+\mathbf{H}_{m})\mathbf{w}+1}{\mathbf{w}\herm\mathbf{H}_{m}\mathbf{w}+1}\leq\beta,\;\forall m\label{eq:eve:sinr}\\
 & \;\mathbf{w}\herm\mathbf{C}\mathbf{w}\leq P_{\mathrm{tot}},\;\mathbf{w}\herm\D(\mathbf{e}_{k})\mathbf{C}\mathbf{w}\leq P_{k},\forall k\label{eq:secureAF:power}
\end{align}
\end{subequations} In fact, \eqref{eq:secrecyrate:reform1} is an
epigraph form of \eqref{eq:secrecyrate} assuming the optimal value
of the latter is strictly positive. Regarding \eqref{eq:secrecyrate:reform1},
the objective function can be approximated using the upper bound in
\eqref{eq:App3}, while the constraints \eqref{eq:des:sinr} and \eqref{eq:eve:sinr}
can be approximated by \eqref{eq:App4} and \eqref{eq:app6}, respectively.
The resulting convexified subproblem is an SOCP.

To complete the first application, we now provide the worst-case computational
complexity of the proposed SOCP-based method and the SDP-based solution
in \cite{Yang:SecureAF:2013}, using the results in \cite{SOCApp_boyd1999}.
For the former, the worst-case arithmetic cost per iteration is $\mathcal{O}\bigl(K^{3}M(K+M)^{0.5}\bigr)$
which reduces to $\mathcal{O}\bigl(K^{3.5}\bigr)$ when $K\gg M$.
For the latter, the worst-case per-iteration computational cost is
$\mathcal{O}\left(K^{4}(M+K)^{0.5}(M+K^{2})\right)$ reducing to $\mathcal{O}\left(K^{6.5}\right)$
when $K\gg M$.\footnote{We omit the constant related to the desired solution accuracy for
the complexity analysis.} The analysis implies that the per-iteration complexity of the proposed
solution is much less sensitive to $K$ than that in \cite{Yang:SecureAF:2013}.
We recall that the optimization method in \cite{Yang:SecureAF:2013}
is also an iterative procedure. In addition, the complexity of each
subproblem in the proposed method is much less than that in \cite{Yang:SecureAF:2013}
(in orders of magnitude). Thus, we can reasonably expect that the
proposed solution is superior to the SDP-based method in \cite{Yang:SecureAF:2013}
in terms of numerical efficiency, which will be elaborated by numerical
experiments in the following.

\subsection{Numerical Results}

\begin{figure}
\centering
\subfigure[Average secrecy rate (in bps).]{\label{fig:secureafrate}\includegraphics[width=1\columnwidth]{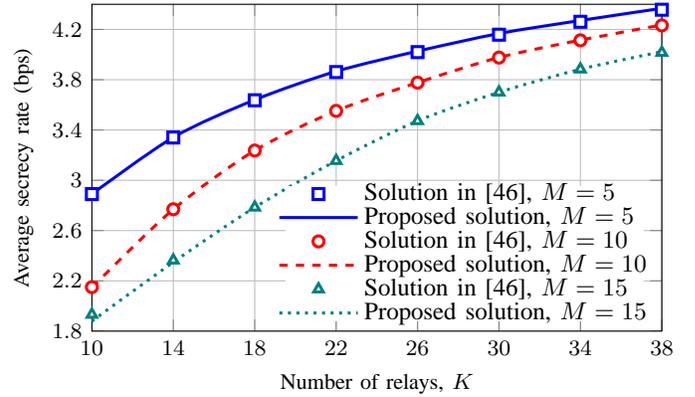}}\\
\subfigure[Average run time (in seconds).]{ \label{fig:secureaftime}\includegraphics[width=1\columnwidth]{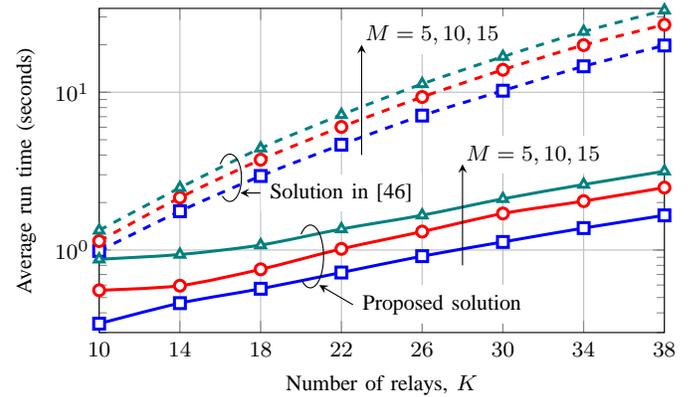}}

\caption{Average secrecy rate and average run time of the proposed solution
and the solution in \cite{Yang:SecureAF:2013} versus different numbers
of relays, $K$. We take the number of eavesdroppers, $M$, as $5$,
$10$, and $15$.}
 \label{fig:secureafperf}
\end{figure}

We now numerically evaluate the performances of the proposed solution
in terms of achieved secrecy rate and computational complexity (i.e.
run time). The considered simulation model follows the one in \cite{Yang:SecureAF:2013}.
Specifically, the channels are independent Rayleigh fading with zero
means and unit variances. The noise variance is set to $\sigma^{2}=1$.
The transmit power at the source is $P_{\mathrm{s}}=20$ dB, the maximum
total transmit power at the relays is $P_{\textrm{tot}}=15$ dB, and
the power budget at antenna $k$ is determined as $P_{k}=P_{\textrm{tot}}/2K$
if $k$ is odd and $P_{k}=2P_{\textrm{tot}}/K$ otherwise. We note
that a feasible $(\mathbf{w},\alpha,\beta)$ can be easily generated
as follows. First, a random (but small enough) $\mathbf{w}$ is generated
satisfying \eqref{eq:secureAF:power}. Then, the left sides of the
constraints \eqref{eq:des:sinr} and \eqref{eq:eve:sinr} are computed
accordingly, from which feasible $\alpha$ and $\beta$ can be found
easily.

In Fig. \ref{fig:secureafrate}, we plot the achieved secrecy rate
(in bits per second) of the considered schemes with different numbers
of relays and eavesdroppers. We can observe that, in all cases, the
proposed SOCP-based solution achieves nearly the same performance
as the SDP-based solution in \cite{Yang:SecureAF:2013}, but with
much lower computation time as shown in Fig. \ref{fig:secureaftime}.
As expected, the run time of the proposed solution increases slowly
with $M$ and $K$ compared to that of the existing method. In particular,
the computation time improvement achieved by the proposed solution
is huge for large numbers of relays and eavesdroppers (approximately
10 times faster in favor of the proposed method at $K=38$). This
observation is consistent with the theoretical bounds of the arithmetical
cost reported in the previous subsection.

\section{Application II: Beamforming Designs for Cognitive Radio Multicasting
\label{sec:Application-II:-Cognitive}}

We now turn our attention to the cognitive radio which has been considered
as one of the most promising techniques to improve the spectrum utilization.
In particular, we revisit the problem of transmit power minimization
for secondary multicasting investigated in \cite{Anh:Cogni:2012}.

\subsection{System Model and Problem Formulation}

The considered system model consists of a secondary multi-antenna
base station transmitting data to $G$ groups of secondary single-antenna
users where users in the same group receive the same information content.
Let $\mathcal{U}_{g}$, $g=1,...,G$, denote the set of users in group
$g$. The total number of secondary users is $M$, and each user belongs
to only one group. In addition, there exist $L$ primary single-antenna
users who are interfered by the secondary transmission. A diagram
of the considered system is shown in Fig. \ref{fig:Cogmulticastmodel}.
Let $N$ be the number of antennas equipped at the secondary BS, $\mathbf{h}_{i}\in\mathbb{C}^{1\times N}$
be the channel (row) vector between the secondary BS and secondary
user (SU) $i$, $\mathbf{l}_{j}\in\mathbb{C}^{1\times N}$ be the
channel vector between the secondary BS and primary user (PU) $j$,
and $\mathbf{x}_{g}\in\mathbb{C}^{N\times1}$ be the multicast transmit
beamforming vector at the secondary BS for group $g$. The problem
of minimizing the transmit power at the secondary BS is stated as
\cite{Anh:Cogni:2012}\begin{subequations}\label{eq:cogmulticast}
\begin{align}
\underset{\{\mathbf{x}_{g}\}_{g=1}^{G}}{\mini} & \;\sum_{g=1}^{G}||\mathbf{x}_{g}||_{2}^{2}\label{eq:cogmulticast-obj}\\
\st & \;\frac{|\mathbf{h}_{i}\mathbf{x}_{g}|^{2}}{\sum_{k\neq g}|\mathbf{h}_{i}\mathbf{x}_{k}|^{2}+\sigma_{i}^{2}}\geq\alpha_{i},\forall i\in\mathcal{U}_{g},g=1,...,G\label{eq:cogmulticast_secondary}\\
 & \;\sum_{g=1}^{G}|\mathbf{l}_{j}\mathbf{x}_{g}|^{2}\leq\beta_{j},\;\forall j=1,...,L\label{eq:cogmulticast_primary}
\end{align}
\end{subequations} where $\beta_{j}$ is the predefined interference
threshold at PU $j$, $\alpha_{i}$ and $\sigma_{i}^{2}$ are the
QoS level and noise variance corresponding to SU $i$, respectively.
The constraints in \eqref{eq:cogmulticast_secondary} guarantee the
QoSs of the SUs while those in \eqref{eq:cogmulticast_primary} ensure
that the interference generated by the secondary transmission at the
PUs are smaller than predefined thresholds.
\begin{figure}
\centering{}\includegraphics[width=1\columnwidth]{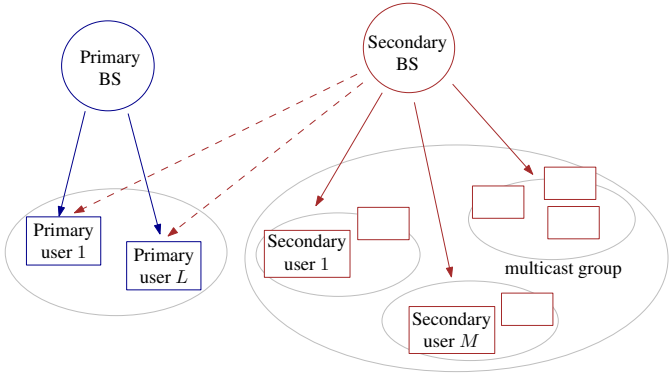}\caption{An example of cognitive multicast transmissions with $L$ PUs and
multiple multicast groups including total $M$ SUs.}
 \label{fig:Cogmulticastmodel}
\end{figure}

Problem \eqref{eq:cogmulticast} is a nonconvex program, and the prevailing
approach is to lift the problem into a SDP \cite{Multical:Luo:2006,multicast_zlou_2008},
i.e. PSD matrices $\mathbf{X}_{g}=\mathbf{x}_{g}\mathbf{x}_{g}\herm$,
$\forall g$, are introduced and the rank-1 constraints on $\{\mathbf{X}_{g}\}_{g=1}^{G}$
are ignored. However, this approach cannot guarantee even a feasible
solution, since the relaxed problem generally does not yield rank-1
solutions and randomization procedures are inefficient. To overcome
this shortcoming, \cite{Anh:Cogni:2012} proposed an iterative approach
where the PSD matrices $\{\mathbf{X}_{g}\}_{g=1}^{G}$ were still
introduced. However, the noncovex rank-1 constraints (which were expressed
a reverse convex constraint, i.e. $\sum_{g=1}^{G}\tr(\mathbf{X}_{g})-\lambda_{\max}(\mathbf{X}_{g})\leq0$
where $\lambda_{\max}(\mathbf{X}_{g})$ is the maximal eigenvalue
of $\mathbf{X}_{g}$) were sequentially approximated in a manner similar
to the SCA framework presented in this paper.

\subsection{Proposed SOCP-based Solution}

We observe that the objective function in \eqref{eq:cogmulticast-obj}
and constraint in \eqref{eq:cogmulticast_primary} are convex. Thus
the difficulty of solving the problem comes from the nonconvex constraints
in \eqref{eq:cogmulticast_secondary}. To handle these constraints,
let us introduce $\mathbf{x}\triangleq[\mathbf{x}_{1};...;\mathbf{x}_{G}]$
and equivalently rewrite \eqref{eq:cogmulticast} as \begin{subequations}\label{eq:cogmulticast-1}
\begin{align}
\underset{\mathbf{x}}{\mini} & \;||\mathbf{x}||_{2}^{2}\label{eq:cogmulticast_obj}\\
\st & \;\frac{\mathbf{x}\herm\hat{\mathbf{H}}_{i}\mathbf{x}}{\mathbf{x}\herm\bar{\mathbf{H}}_{i}\mathbf{x}+\sigma_{i}^{2}}\geq\alpha_{i},\forall i\in\mathcal{U}_{g},g=1,...,G\label{eq:cogmulticast_constr1}\\
 & \;\mathbf{x}\herm\mathbf{L}_{j}\mathbf{x}\leq\beta_{j},\;\forall j\label{eq:cogmulticast_constr2}
\end{align}
\end{subequations} where $\hat{\mathbf{H}}_{i}=\blkdiag(\boldsymbol{0},\mathbf{H}_{i},\boldsymbol{0})$,
$\mathbf{H}_{i}\triangleq\mathbf{h}_{i}\herm\mathbf{h}_{i}$, $\bar{\mathbf{H}}_{i}=\blkdiag(\underbrace{\mathbf{H}_{i},\ldots\mathbf{H}_{i}}_{g-1\textrm{ terms}},\mathbf{0},\underbrace{\mathbf{H}_{i},\ldots\mathbf{H}_{i}}_{G-g\textrm{ terms}})$
(for $i\in\mathcal{U}_{g}$), and $\mathbf{L}_{j}=\blkdiag(\underbrace{\mathbf{l}_{j}\herm\mathbf{l}_{j},\ldots\mathbf{l}_{j}\herm\mathbf{l}_{j}}_{G\textrm{ terms}})$.
Now, we can easily use \eqref{eq:App4} to deal with \eqref{eq:cogmulticast_constr1},
and the convex approximated problem is actually an SOCP. We remark
that, for \eqref{eq:cogmulticast}, it is nontrivial to find a feasible
$\{\mathbf{x}_{g}\}_{g=1}^{G}$ to start the SCA procedure. Thus the
relaxed version, i.e. Algorithm \ref{alg:generate-initial-SCA}, is
invoked for some first iterations, until a feasible point is found.

We now compare the complexity of the proposed solution and the method
in \cite{Anh:Cogni:2012}. In particular, the worst-case complexity
for solving the SOCP in \eqref{eq:cogmulticast-1} is $\mathcal{O}\left(G^{3}N^{3}(M+L)^{1.5}\right)$,
while that for the SDP in \cite{Anh:Cogni:2012} is $\mathcal{O}\left(G^{2}N^{4}(M+L+GN)^{0.5}(M+L+GN^{2})\right)$.
When the number of transmit antennas at the secondary BS is large,
the two bounds reduce to $\mathcal{O}\left(N^{3}\right)$ and $\mathcal{O}\left(N^{6.5}\right)$,
respectively.

\subsection{Numerical Results}

\begin{figure}
\centering
\subfigure[Average transmit power at the secondary BS (in dB).]{\label{fig:multicastpower}\includegraphics[width=0.95\columnwidth]{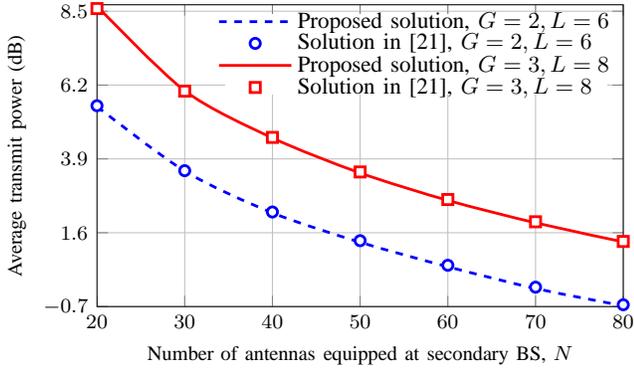}}\\
\subfigure[Average run time (in seconds).]{ \label{fig:multicasttime}\includegraphics[width=0.95\columnwidth]{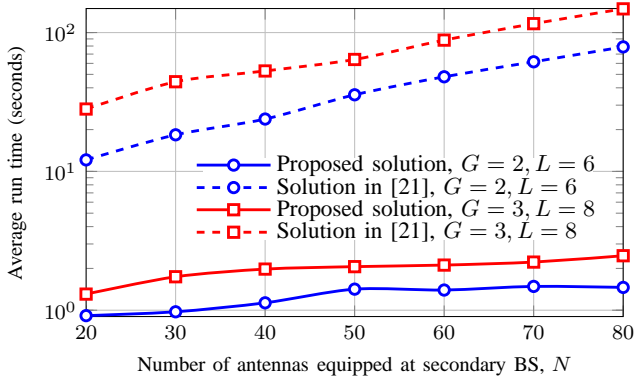}}

\caption{Average transmit power at the secondary BS and average run time of
the proposed solution and the solution in \cite{Anh:Cogni:2012} versus
the number of equipped antennas, $N$. Two sets of $(G,L)$ are taken
those are $(2,6)$ and $(3,8)$. Each multicast group includes 4 secondary
users. }
 \label{fig:multicastperf}
\end{figure}

We follow the simulation model considered in \cite{Anh:Cogni:2012}
for performance comparisons. Particularly, the channels are generated
as $\mathbf{h}_{i}\thicksim\mathcal{CN}(0,\mathbf{I}_{N})$, $\mathbf{l}_{j}\thicksim\mathcal{CN}(0,\mathbf{I}_{N})$.
The QoS levels at the SUs and the interference thresholds at the PUs
are set to $\alpha_{i}=10\;\textrm{dB}$, $\forall i$ and $\beta_{j}=5\;\textrm{dB},$
$\forall j$. The stopping criterion of the method in \cite{Anh:Cogni:2012}
is when the decrease in the last 5 iterations is smaller than $10^{-3}$.

In Fig. \ref{fig:multicastpower}, we plot the average required transmit
power at the secondary BS for the proposed method and the one in \cite{Anh:Cogni:2012}
as functions of the number of transmit antennas $N$. As can be seen,
the transmit powers required by the two schemes are almost the same
for all considered scenarios. In other words, the proposed solution
is similar to the one in \cite{Anh:Cogni:2012} in terms of power
efficiency. However, the proposed method is much more computationally
efficient as demonstrated in Fig. \ref{fig:multicasttime}, in which
we plot the run time of the two methods. We can clearly see that the
average run time of both schemes increases with $N$, but slowly for
the proposed solution and very rapidly for the method in \cite{Anh:Cogni:2012}.
As a result, the computation time of the method in \cite{Anh:Cogni:2012}
is much higher than that of the proposed solution, especially for
large $N$. This numerical observation is consistent with the complexity
analysis provided in the preceding subsection. In summary, the numerical
results in Fig. \ref{fig:multicastperf} demonstrate that the proposed
SOCP-based solution is superior to the existing one presented in \cite{Anh:Cogni:2012}.

\section{Application III: Precoding Design for MIMO Relaying \label{sec:Application-III:-MIMO}}

Relay-assisted wireless communications is expected to play a key role
in improving coverage and spectral efficiency for the current and
future generations of cellular networks. In this section, we apply
the proposed approximations to the scenario of multiuser MIMO relaying
which was investigated in \cite{Anh:MIMOdc:2013}.
\begin{figure}
\centering{}\includegraphics[width=0.9\columnwidth]{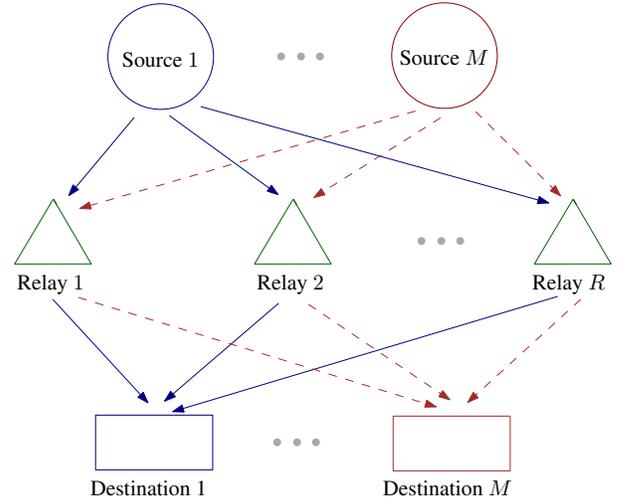}\caption{An example of relaying communications with $M$ source-destination
pairs and $R$ MIMO relays. The transmissions follow the two-phase
precode-and-forward principle.}
 \label{fig:MIMOrelaymodel}
\end{figure}

\subsection{System Model and Problem Formulation}

We consider the wireless communication scenario in which the transmission
of $M$ source-destination pairs are simultaneously assisted by a
set of $R$ relays. Each of the sources and destinations is equipped
with single antenna, whereas each relay is equipped with $N_{\mathrm{R}}$
antennas (the total number of antennas at the relays is $N=RN_{\mathrm{R}}$).
Suppose there are no direct links between the sources and destinations,
and the relays operate according to the AF protocol. That is, the
sources transmit their information to the relays in the first phase,
and then the relays process the received signals and retransmit them
to the destinations in the second phase. The considered system model
is illustrated in Fig. \ref{fig:MIMOrelaymodel}. We reuse the notations
introduced in \cite{Anh:MIMOdc:2013}. Specifically, let $\mathbf{s}=[s_{1},s_{2},...,s_{M}]\trans\in\mathbb{C}^{M\times1}$,
where $E[|s_{i}|^{2}]=\sigma_{\mathrm{s}}^{2}$, be the vector of
messages sent by the sources, $\mathbf{h}_{i}=[h_{i1},h_{i2},...,h_{iN}]\trans\in\mathbb{C}^{N\times1}$
and $\mathbf{l}_{i}=[l_{i1},l_{i2},...,l_{iN}]\trans\in\mathbb{C}^{N\times1}$
be the vectors of channels from source $i$ to the relays and from
the relays to destination $i$, respectively. Let $\mathbf{X}_{k}\in\mathbb{C}^{N_{\mathrm{R}}\times N_{\mathrm{R}}}$
be the processing matrix at relay $k$, $k=1,...,R$. For ease of
description, we also define $\mathbf{H}\triangleq[\mathbf{h}_{1},\mathbf{h}_{2},...,\mathbf{h}_{M}]$,
$\mathbf{L}\triangleq[\mathbf{l}_{1},\mathbf{l}_{2},...,\mathbf{l}_{M}]$,
and $\mathbf{X}\triangleq\mathrm{blkdiag}\{\mathbf{X}_{k}\}_{k}$
as in \cite{Anh:MIMOdc:2013}. Then the signal vectors received at
the relays and destination $i$ are given by
\begin{equation}
\mathbf{y}_{\mathrm{up}}=\mathbf{Hs}+\mathbf{n}_{\mathrm{re}}
\end{equation}
and
\begin{equation}
y_{\mathrm{de},i}=\mathbf{l}_{i}\trans\mathbf{X}\mathbf{h}_{i}s_{i}+\sum_{j\neq i}\mathbf{l}_{i}\trans\mathbf{X}\mathbf{h}_{j}s_{j}+\mathbf{l}_{i}\trans\mathbf{X}\mathbf{n}_{\mathrm{re}}+n_{\mathrm{de},i}
\end{equation}
respectively, where $\mathbf{n}_{\mathrm{re}}\thicksim\mathcal{CN}(0,\sigma_{\mathrm{re}}^{2}\mathbf{I}_{N})$
and $n_{\mathrm{de},i}\thicksim\mathcal{CN}(0,\sigma_{\mathrm{de}}^{2})$
are the noise vectors at the relays and destination $i$. Accordingly,
the SINR at the $i$th destination is written as
\begin{equation}
\gamma_{i}(\mathbf{X})=\frac{\sigma_{\mathrm{s}}^{2}|\mathbf{l}_{i}\herm\mathbf{X}\mathbf{h}_{i}|^{2}}{\sigma_{\mathrm{s}}^{2}\sum_{j\neq i}|\mathbf{l}_{i}\herm\mathbf{X}\mathbf{h}_{j}|^{2}+\sigma_{\mathrm{re}}^{2}\left\Vert \mathbf{X}\herm\mathbf{l}_{i}\right\Vert _{2}^{2}+\sigma_{\mathrm{de}}^{2}}\label{eq:SINR:Relay}
\end{equation}
The power consumption at antenna $n$, $n=1,...,N$, is given by
\begin{equation}
P_{n}=\mathbf{e}_{n}\mathbf{X}\mathbf{B}\mathbf{X}\herm\mathbf{e}_{n}\herm
\end{equation}
where $\mathbf{B}\triangleq\sigma_{\mathrm{s}}^{2}\mathbf{H}\mathbf{H}\herm+\sigma_{\mathrm{re}}^{2}\mathbf{I}$,
and $\mathbf{e}_{n}\triangleq[\underset{n-1}{\underbrace{0,\ldots,0}},1,\underset{N-n}{\underbrace{0,\ldots,0}}]$.
Based on the above discussions, the problem of max-min fairness data
rate between the source-destination pairs under the power constraint
of each individual antenna at the relays reads \cite{Anh:MIMOdc:2013}
\begin{subequations}\label{eq:MIMOrelay}
\begin{align}
\underset{\mathbf{X}}{\maxi} & \;\underset{1\leq i\leq M}{\min}\;\gamma_{i}(\mathbf{X})\\
\st & \;\mathbf{e}_{n}\mathbf{X}\mathbf{B}\mathbf{X}\herm\mathbf{e}_{n}\herm\leq\bar{P}_{n},\;\forall n=1,...,N\label{eq:MIMOrelay_PAPC}
\end{align}
\end{subequations} where $\bar{P}_{n}$ is the maximum transmit power
at antenna $n$. The left sides of constraints in \eqref{eq:MIMOrelay_PAPC}
are convex quadratic functions, since matrix $\mathbf{B}$ is positive
definite. Thus the feasible set of \eqref{eq:MIMOrelay} is convex.
However, \eqref{eq:MIMOrelay} is still intractable because $\gamma_{i}(\mathbf{X})$
is neither concave nor convex with respect to $\mathbf{X}$. To solve
\eqref{eq:MIMOrelay}, \cite{Anh:MIMOdc:2013} transformed it to a
DC program where the resulting feasible set contains linear matrix
inequalities (LMIs) and the objective is a DC function. After that,
\cite{Anh:MIMOdc:2013} applied the DCA to solve the DC program which
results in an SDP in each iteration.

\subsection{Proposed SOCP-based Solution}

To apply the proposed approximations, let us define $\mathbf{x}\triangleq\myvec(\mathbf{X})\in\mathbb{C}^{R^{2}N_{\mathrm{R}}^{2}}$
and rewrite the expression in \eqref{eq:SINR:Relay} with respect
to $\mathbf{x}$. To this end we recall two useful inequalities: $\tr(\mathbf{A}\herm\mathbf{B})=\myvec(\mathbf{A})\herm\myvec(\mathbf{B})$
and $\myvec(\mathbf{A}\mathbf{B})=(\mathbf{I}_{n}\otimes\mathbf{A})\myvec(\mathbf{B})$.
Now it is easy to see that \eqref{eq:SINR:Relay} is equivalent to
\begin{align}
\gamma_{i}(\mathbf{x}) & =\frac{\sigma_{\mathrm{s}}^{2}\mathbf{x}\herm\bar{\mathbf{H}}_{ii}\mathbf{x}}{\sigma_{\mathrm{s}}^{2}\sum_{j\neq i}\mathbf{x}\herm\bar{\mathbf{H}}_{ij}\mathbf{x}+\sigma_{\mathrm{re}}^{2}\mathbf{x}\herm\tilde{\mathbf{L}}_{i}\mathbf{x}+\sigma_{\mathrm{de}}^{2}}\nonumber \\
 & =\frac{\sigma_{\mathrm{s}}^{2}\mathbf{x}\herm\bar{\mathbf{H}}_{ii}\mathbf{x}}{\mathbf{x}\herm\tilde{\mathbf{H}}_{i}\mathbf{x}+\sigma_{\mathrm{de}}^{2}}
\end{align}
where $\bar{\mathbf{H}}_{ij}\triangleq\myvec(\mathbf{l}_{i}\mathbf{h}_{j}\herm)\myvec(\mathbf{l}_{i}\mathbf{h}_{j}\herm)\herm$,
$\tilde{\mathbf{L}}_{i}\triangleq(\mathbf{I}_{N}\otimes\mathbf{l}_{i}\herm)\herm(\mathbf{I}_{N}\otimes\mathbf{l}_{i}\herm)$,
and $\tilde{\mathbf{H}}_{i}\triangleq\sigma_{\mathrm{s}}^{2}\sum_{j\neq i}\bar{\mathbf{H}}_{ij}+\sigma_{\mathrm{re}}^{2}\tilde{\mathbf{L}}_{i}$.
Based on above discussion, we can reformulate \eqref{eq:MIMOrelay}
using its epigraph form as \begin{subequations}\label{eq:mimorelay-epigraph}
\begin{align}
\underset{\mathbf{x},\vartheta>0}{\maxi} & \;\vartheta\\
\st & \;\gamma_{i}(\mathbf{x})\geq\vartheta,\;\forall i=1,...,M\label{eq:MIMOrelayconstr1-epi}\\
 & \;\mathbf{x}\herm\tilde{\mathbf{B}}_{n}\mathbf{x}\leq\bar{P}_{n},\;\forall n=1,...,N.\label{eq:MIMOrelayconstr3-epi}
\end{align}
\end{subequations} where $\tilde{\mathbf{B}}_{n}\triangleq(\mathbf{I}_{N}\otimes\mathbf{A}_{n})\herm(\mathbf{B}\trans\otimes\mathbf{I}_{N})$,
$\mathbf{A}_{n}\triangleq\mathbf{e}_{n}\mathbf{e}_{n}\herm$. Now,
the nonconvex constraints in \eqref{eq:MIMOrelayconstr1-epi} can
be straightforwardly approximated by \eqref{eq:App5}.

The complexity analysis is as follows. We recall that the dimension
of $\mathbf{x}$ in \eqref{eq:mimorelay-epigraph} is $R^{2}N_{\mathrm{R}}^{2}$,
however the number of complex variables in $\mathbf{x}$ is only $RN_{\mathrm{R}}^{2}$
(other elements are zero according to the arrangement of $\mathbf{X}$).
Therefore, assumming $RN_{\mathrm{R}}\gg M$, the worst-case complexity
for solving \eqref{eq:mimorelay-epigraph} is $\mathcal{O}(R^{4.5}N_{\mathrm{R}}^{6.5})$.
On the other hand, for the SDP considered in \cite{Anh:MIMOdc:2013},
the corresponding worst-case complexity is $\mathcal{O}(R^{6}N_{\mathrm{R}}^{8})$.
Thus, the proposed SOCP-based approach achieves significant gains
in terms of computation time over the SDP-based method in \cite{Anh:MIMOdc:2013}.

\subsection{Numerical Results }

\begin{figure}
\centering
\subfigure[Average achieved minimum SINR (in dB).]{\label{fig:relaymimosnr}\includegraphics[width=0.95\columnwidth]{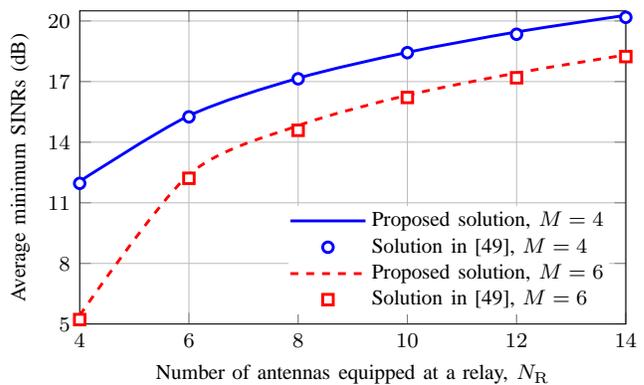}}\\
\subfigure[Average run time (in seconds).]{ \label{fig:relaymimotime}\includegraphics[width=0.95\columnwidth]{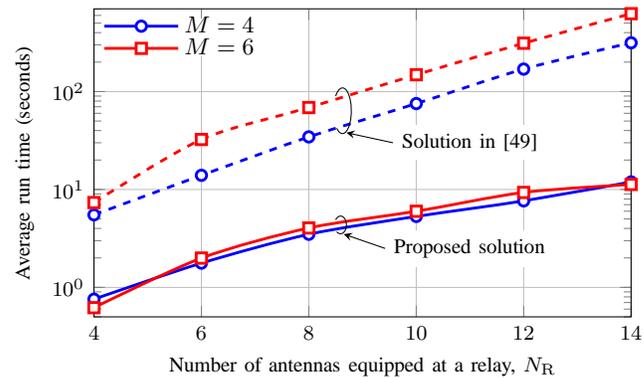}}

\caption{Average achieved minimum SINR and the computational costs (in terms
of run time) of the proposed solution and the solution in \cite{Anh:MIMOdc:2013}
versus number of antennas equipped at a relay, $N_{\mathrm{R}}$.
The number of source-destination pairs $M$ is taken as $4$ and $6$.}
 \label{fig:relaymimoperf}
\end{figure}

We evaluate the proposed solution in terms of run time and achieved
minimum SINR using the system model studied in \cite{Anh:MIMOdc:2013}.
The channels are randomly generated as $\mathbf{h}_{i}\thicksim\mathcal{CN}(0,\mathbf{I}_{N})$,
and $\mathbf{l}_{i}\thicksim\mathcal{CN}(0,\mathbf{I}_{N})$. The
noise variances at the relays and destinations are set to $\sigma_{\mathrm{re}}^{2}=\sigma_{\mathrm{de}}^{2}=1$.
The transmit power is $\sigma_{\mathrm{s}}^{2}=20$ dB for all the
sources and the number of relays is $R=2$. We simply set the power
budget $P_{\mathrm{R}}=10$ dB for all relays, and the maximum transmit
power at each individual antenna is $\bar{P}_{n}=P_{\mathrm{R}}/N_{\mathrm{R}}$,
$\forall n$. For this problem, a feasible initial point to \eqref{eq:mimorelay-epigraph}
can be easily created by simple manipulations. That is, $\mathbf{x}^{(0)}$
is randomly generated and then rescaled, if required, to satisfy the
power constraints in \eqref{eq:MIMOrelay_PAPC}. The iterative method
in \cite{Anh:MIMOdc:2013} terminates if the increase in the objective
in the last 5 iterations is less than $10^{-3}$.

In Fig. \ref{fig:relaymimosnr} we compare the achieved minimum SINR
of the proposed SOCP-based method and the SDP-based solution in \cite{Anh:MIMOdc:2013}
as functions of $N_{\mathrm{R}}$. We can observe that the minimum
SINR performance of two methods of comparison is the same for all
considered scenarios. However the method in \cite{Anh:MIMOdc:2013}
comes at the cost of much higher computational complexity which is
shown in Fig. \ref{fig:relaymimotime}. In particular, Fig. \ref{fig:relaymimotime}
shows that the complexity of both methods increases with respect to
$N_{\mathrm{R}}$. We can also see that the run time of the proposed
solution is much smaller than that of the SDP-based solution. This
observation agrees with the complexity analysis presented in the previous
subsection.

\section{Application IV: Power Management for Weighted Sum Rate Maximization
and Max-mix Fairness Energy Efficiency in Multiuser Multicarrier Systems
\label{sec:Application-IV:-Power}}

In the last application, we apply the proposed approximations to solve
the power control problems for maximizing weighted sum rate and energy
efficiency fairness. While the former is a classical problem, the
latter has been receiving growing attention in recent years for green
wireless networks. The power control problem for weighted sum rate
maximization (WSRmax) was studied in \cite{Wang:SCALE:2012}, using
a GP method, while the one for energy efficiency fairness was recently
studied in \cite{Zappone:EEpowercontrol:2015} using a generic NLP
approach. We will show in the following that these two important problems
can be simply solved by the proposed conic quadratic formulations.

\subsection{System Model and Problem Formulation}

\begin{figure}
\centering \includegraphics[width=0.95\columnwidth]{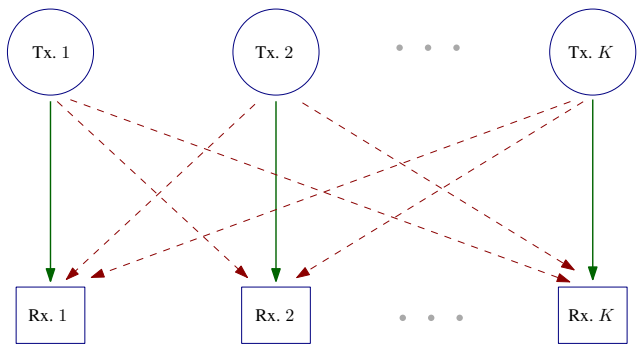}

\caption{A diagram of multiuser multicarrier systems. The transmitters and
receivers are labeled as $\textrm{Tx.}$ and $\textrm{Rx}.$, respectively.
The green (solid) lines represent the desire signals, while the red
(dashed) ones denote the interference from other transmitters. The
lines in the figure represent one of $N$ independent subcarriers.}

\label{fgi:mumc-1}
\end{figure}

Consider a general network of $K$ communication links where each
of the transmitters and receivers is equipped with a single antenna
as illustrated in Fig. \ref{fgi:mumc-1}. This system model has been
the subject in many works such as \cite{Wang:SCALE:2012,Zappone:EEpowercontrol:2015,Kha:dcprog:2012,maxminEE:SCA:TVT:2015,scale:2009}.
For the ease of discussion, we follow the notations used in \cite{Wang:SCALE:2012}.
The $k$th transmitter sends its data to the $k$th receiver through
$N$ independent subcarriers. The channel gain between transmitter
$l$ and receiver $k$ over carrier $n$ is denoted by $G_{kln}$.
We note that this setup is a generalization of the system model studied
in \cite{Kha:dcprog:2012,maxminEE:SCA:TVT:2015} where $N=1$. The
interference at each receiver is treated as background noise. Accordingly,
the achievable data rate of the $k$th link is given by
\begin{equation}
R_{k}=\sum_{n=1}^{N}\log\biggl(1+\frac{G_{kkn}p_{kn}}{\sigma^{2}+\sum_{l\neq k}G_{kln}p_{ln}}\biggr)
\end{equation}
where $p_{kn}$ is the amount of power allocated to the $n$th carrier
at the $k$th transmitter, and $\sigma^{2}$ denotes the noise variance.

\subsubsection{Weighted Sum Rate Maximization}

The WSRmax problem is given by \cite{Wang:SCALE:2012,scale:2009}\begin{subequations}\label{eq:PCWSR:origin}
\begin{align}
\underset{\mathbf{p}}{\maxi} & \;\sum_{k=1}^{K}w_{k}R_{k}\\
\st & \;\sum_{n=1}^{N}p_{kn}\leq\bar{p}_{k},\,\forall k;\;0\leq p_{kn}\leq\bar{p}_{kn},\,\forall k,n.\label{eq:mcpowerconstraint}
\end{align}
\end{subequations} where $w_{k}$ is the weight associated with the
$k$th user, $\bar{p}_{k}$ is the maximum transmit power at transmitter
$k$ and $\bar{p}_{kn}$ is the power spectrum mask. It is known that
\eqref{eq:PCWSR:origin} is NP-hard \cite{tomLuocomplexity2008},
and \cite{Wang:SCALE:2012} and \cite{scale:2009} applied an SCA
approach to solve this problem. In particular, \cite{Wang:SCALE:2012}
approximated the WSRmax problem as a sequence of GPs. As discussed
in \cite{Wang:SCALE:2012} the GP-based method suffers some numerical
difficulties.

\subsubsection{Energy Efficiency Fairness }

The problem of max-min energy efficiency fairness (maxminEEfair) for
the multiuser multicarrier system has been considered in many recent
works such as \cite{Zappone:EEpowercontrol:2015,maxminEE:SCA:TVT:2015}
which is stated as \begin{subequations}\label{eq:mumc:EEfair}
\begin{align}
\underset{\mathbf{p}}{\maxi} & \;\underset{1\leq k\leq K}{\min}\;\frac{R_{k}}{\sum_{n=1}^{N}p_{kn}+p_{k}^{\mathrm{c}}}\\
\st & \;\sum_{n=1}^{N}p_{kn}\leq\bar{p}_{k},\,\forall k;\;0\leq p_{kn}\leq\bar{p}_{kn},\,\forall k,n.
\end{align}
\end{subequations} where $p_{k}^{\mathrm{c}}$ is a constant that
denotes the total circuit power consumption for the link between the
$k$th transmitter and the $k$th receiver. To solve \eqref{eq:mumc:EEfair},
\cite{Zappone:EEpowercontrol:2015} proposed a two-stage iterative
method which is a combination of modified Dinkelbach's algorithm and
the SCA technique. More specifically, the SCA framework is applied
to arrive at a max-min concave-convex fractional program which is
solved by the modified Dinkelbach method. We note that the work of
\cite{Zappone:EEpowercontrol:2015} used the same approximations introduced
in \cite{Wang:SCALE:2012,scale:2009,power_controlGP} to deal with
the nonconvexity of $R_{k}$. However, the convex problem achieved
at each iteration of their proposed iterative method is a generic
NLP.

\subsection{Proposed SOCP-based Solutions }

\subsubsection{Proposed Solution for WSRmax\label{subsec:WSRmaxmulticarr}}

Using the epigraph form we can equivalently rewrite \eqref{eq:PCWSR:origin}
as \begin{subequations}\label{eq:PCWSR}
\begin{align}
\underset{\mathbf{p},\mathbf{t}\geq1}{\maxi} & \;\sum_{k=1}^{K}w_{k}\sum_{n=1}^{N}\log t_{kn}\label{eq:PCWSR:obj}\\
\st & \;\frac{\sigma_{kn}^{2}+\sum_{l=1}^{K}G_{kln}p_{ln}}{\sigma_{kn}^{2}+\sum_{l\neq k}G_{kln}p_{ln}}\geq t_{kn},\,\forall k,n\label{eq:PCWSR:softSINR}\\
 & \sum_{n=1}^{N}p_{kn}\leq\bar{p}_{k},\,\forall k;\;0\leq p_{kn}\leq\bar{p}_{kn},\,\forall k,n.
\end{align}
\end{subequations} It is obvious that \eqref{eq:App1} and \eqref{eq:App2}
can be used to handle \eqref{eq:PCWSR:softSINR}, but this does not
lead to an SOCP immediately due to the fact that the cost function
in \eqref{eq:PCWSR:obj} is not in a quadratic form at hand. However,
if $w_{k}=1$ for all $k$ (i.e., the sum rate maximization problem),
then the objective in \eqref{eq:PCWSR:obj} can be replaced by the
geometric mean of $t_{kn}$'s, which is SOC representable \cite{socp_Alizadeh},
and the resulting problem becomes an SOCP. For the general case of
$w_{k}$'s, we can rewrite \eqref{eq:PCWSR:origin} as \begin{subequations}\label{eq:PCWSR:gen}
\begin{align}
\underset{\mathbf{p},\mathbf{t}\geq0}{\maxi} & \;\sum_{k=1}^{K}w_{k}\sum_{n=1}^{N}t_{kn}\label{eq:PCWSR:obj-1}\\
\st & \;u_{kn}\log\frac{u_{kn}}{\tilde{u}_{kn}}\geq u_{kn}t_{kn},\,\forall k,n\label{eq:PCWSR:softSINR-1}\\
 & \sum_{n=1}^{N}p_{kn}\leq\bar{p}_{k},\,\forall k;\;0\leq p_{kn}\leq\bar{p}_{kn},\,\forall k,n.
\end{align}
\end{subequations} where $u_{kn}\triangleq\sigma_{kn}^{2}+\sum_{l=1}^{K}G_{kln}p_{ln}$
and $\tilde{u}_{kn}\triangleq\sigma_{kn}^{2}+\sum_{l\neq k}G_{kln}p_{ln}$.
Note that $u_{kn}$ and $\tilde{u}_{kn}$ are not newly introduced
variables but affine expressions of $p_{kn}$'s. We remark that $u_{kn}\log\frac{u_{kn}}{\tilde{u}_{kn}}$
is jointly convex with $u_{kn}$ and $\tilde{u}_{kn}$. Thus, in light
of the SCA framework, we can approximate the left hand side in \eqref{eq:PCWSR:softSINR-1}
using the first order and the right hand side using \eqref{eq:App1}
or \eqref{eq:App2}, which results in an SOCP.

The proposed method described above introduces $KN$ additional auxiliary
variables, which has a detrimental impact on the overall complexity
for large-scale problems. For \eqref{eq:PCWSR}, we are able to arrive
at a more efficient formulation. The idea is that since the feasible
set of \eqref{eq:PCWSR:origin} is defined by linear constraints,
we can approximate the objective by a quadratic function based on
the Lipschitz continuity to arrive at a QP. This approximation method
has been briefly discussed in Section \ref{subsec:Choice-of-Approximation}
and will be elaborated in the following. Let us rewrite \eqref{eq:PCWSR:origin}
as \begin{subequations}\label{eq:PCWSR:DC}
\begin{align}
\underset{\mathbf{p},\mathbf{t}}{\mini} & \;f(\mathbf{p},\mathbf{t})\triangleq-\sum_{k=1}^{K}w_{k}\sum_{n=1}^{N}\log\Bigl(1+\frac{p_{kn}}{t_{kn}}\Bigr)\\
\st & \;t_{kn}=\tilde{\sigma}_{kn}^{2}+\sum_{l\neq k}\tilde{G}_{kln}p_{ln},\,\forall k,n;\eqref{eq:mcpowerconstraint}
\end{align}
\end{subequations} where $\mathbf{t}\triangleq\{t_{kn}\}_{k,n}$,
$\tilde{\sigma}_{kn}^{2}=\sigma^{2}/G_{kkn}$ and $\tilde{G}_{kln}=G_{kln}/G_{kkn}$
for $l\ne k$. For mathematical exposition, we temporarily treat $\mathbf{t}$
as a vector of newly introduced variables. It implicitly holds that
$0<\tilde{\sigma}_{kn}^{2}\leq t_{kn}\leq\tilde{\sigma}_{kn}^{2}+\sum_{l\neq k}\tilde{G}_{kln}\bar{p}_{ln}$,
$\forall k,n$. In particular, we have the following lemma.
\begin{lem}
\label{lem:lemma1}Consider the function $f(x,y)=\log(1+\frac{x}{y})$
over the domain $D=\{(x,y)|x\geq x_{0}\geq0,y\geq y_{0}>0\}$. Then
the Hessian of $f(x,y)$ satisfies $\nabla^{2}f(x,y)+\rho\mathbf{I}_{2}\succeq\boldsymbol{0}$
for $\rho\geq\rho_{0}=(x_{0}+y_{0})^{-2}$.
\end{lem}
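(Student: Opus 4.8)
The plan is to compute the Hessian in closed form, use the fact that it is indefinite to reduce the matrix inequality $\nabla^{2}f+\rho\mathbf{I}_{2}\succeq\boldsymbol{0}$ to a single scalar threshold on $\rho$, and then bound that threshold over the domain $D$. First I would rewrite $f(x,y)=\log(x+y)-\log y$ and differentiate twice. Writing $a\triangleq(x+y)^{-2}$ and $b\triangleq y^{-2}$, this gives
\begin{equation}
\nabla^{2}f(x,y)=\begin{pmatrix}-a & -a\\ -a & b-a\end{pmatrix}.
\end{equation}
Since $\det\nabla^{2}f=-ab<0$ on $D$, the Hessian is indefinite with exactly one negative eigenvalue, so the shifted matrix is positive semidefinite if and only if $\rho\geq-\lambda_{\min}(\nabla^{2}f)$. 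From the trace $b-2a$ and the determinant $-ab$,
\begin{equation}
\lambda_{\min}(\nabla^{2}f)=\tfrac{1}{2}\bigl(b-2a-\sqrt{b^{2}+4a^{2}}\bigr),
\end{equation}
so the claim is equivalent to the scalar inequality $\tfrac{1}{2}\bigl(2a-b+\sqrt{b^{2}+4a^{2}}\bigr)\leq\rho_{0}$ for every $(x,y)\in D$.

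For the estimate it is often cleaner to apply Sylvester's criterion directly to $\nabla^{2}f+\rho\mathbf{I}_{2}$: it is positive semidefinite iff $\rho-a\geq0$ and its determinant $(\rho-a)(\rho-a+b)-a^{2}\geq0$. Two structural facts about $D$ feed these conditions. Because $x+y\geq x_{0}+y_{0}$ we have $a=(x+y)^{-2}\leq(x_{0}+y_{0})^{-2}=\rho_{0}$, which immediately settles the first minor for any $\rho\geq\rho_{0}$. Because $x\geq x_{0}\geq0$ forces $x+y\geq y$, we also obtain the crucial relation $a\leq b$, which I expect to be what allows the constant to come out as small as $\rho_{0}$.

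The main obstacle is the determinant (equivalently Schur-complement) condition $(\rho-a)(\rho-a+b)\geq a^{2}$ for all $\rho\geq\rho_{0}$, since this is what actually fixes the constant. I would treat the left side as a quadratic $t^{2}+bt-a^{2}$ in $t\triangleq\rho-a\geq0$; its nonnegativity requires $t\geq\tfrac{1}{2}\bigl(\sqrt{b^{2}+4a^{2}}-b\bigr)$, and the binding case over $D$ is the corner $x=x_{0}$, $y=y_{0}$, where $a$ attains its maximum. The delicate point is carrying out this estimate sharply enough to reach the stated $\rho_{0}$: the crude bound $\sqrt{b^{2}+4a^{2}}\leq b+2a$ only yields $-\lambda_{\min}\leq2a\leq2\rho_{0}$, so the relation $a\leq b$ must be exploited to push the constant below $2$, and pinning down its exact value is the crux of the argument.

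In summary, the easy part is the algebra producing the Hessian and the first principal-minor bound $a\leq\rho_{0}$, both of which follow directly from $x+y\geq x_{0}+y_{0}$. The real content lies in the determinant estimate, where one must combine $a\leq\rho_{0}$ with the domain-induced inequality $a\leq b$ and evaluate at the corner of $D$ to conclude $-\lambda_{\min}(\nabla^{2}f)\leq\rho_{0}$, and hence $\nabla^{2}f+\rho\mathbf{I}_{2}\succeq\boldsymbol{0}$ for all $\rho\geq\rho_{0}$.
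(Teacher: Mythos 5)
Your reduction is correct up to the point you flag as the ``crux,'' and it already goes further than the paper's own argument, which merely computes the Hessian and then asserts the conclusion with ``it is easy to see.'' But the step you leave open cannot be closed: the lemma is false as stated, and your own Sylvester reduction exposes this. Evaluate at the corner $(x,y)=(x_{0},y_{0})$ of $D$, where $a=(x_{0}+y_{0})^{-2}=\rho_{0}$. With $\rho=\rho_{0}$ your determinant condition reads $(\rho_{0}-a)(\rho_{0}-a+b)-a^{2}=-a^{2}<0$, so $\nabla^{2}f(x_{0},y_{0})+\rho_{0}\mathbf{I}_{2}$ is never positive semidefinite (for any $x_{0}\geq0$, $y_{0}>0$). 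Concretely, for $x_{0}=0$, $y_{0}=1$ --- exactly the regime in which the paper invokes the lemma, since $p_{kn}\geq0$ ---
\begin{equation}
\nabla^{2}f(0,1)=\begin{pmatrix}-1 & -1\\ -1 & 0\end{pmatrix},\qquad \lambda_{\min}=-\tfrac{1+\sqrt{5}}{2}\approx-1.618<-1=-\rho_{0}.
\end{equation}
So the relation $a\leq b$ cannot push the constant down to $\rho_{0}$; no argument can.

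What your analysis does prove is the ``crude'' bound: $\sqrt{b^{2}+4a^{2}}\leq b+2a$ gives $-\lambda_{\min}\leq2a\leq2\rho_{0}$, i.e., $\nabla^{2}f+\rho\mathbf{I}_{2}\succeq\boldsymbol{0}$ for $\rho\geq2(x_{0}+y_{0})^{-2}$, and this is tight enough for everything downstream. The lemma is only used to conclude that $\log(1+x/y)+\rho x^{2}+\rho y^{2}$ is convex for $\rho\geq\rho_{0}$; since the quadratic contributes $2\rho\mathbf{I}_{2}$ to the Hessian, the actual requirement is $\rho\geq-\lambda_{\min}/2$, which your bound delivers for $\rho\geq\rho_{0}$. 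The correct fix is therefore either to replace $\rho_{0}$ by $2(x_{0}+y_{0})^{-2}$ in the statement, or to restate the lemma directly as the convexity of $f+\rho\|(x,y)\|_{2}^{2}$. Your instinct that the constant was the delicate point was right; the resolution is that the stated constant is off by (up to) a factor of two, a discrepancy the paper's one-line proof glosses over.
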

\begin{proof}
The Hessian of $f(x,y)$ is given by
\begin{multline}
\nabla^{2}f(x,y)=\left[\begin{array}{cc}
-\frac{1}{(x+y)^{2}} & \frac{-1}{(x+y)^{2}}\\
\frac{-1}{(x+y)^{2}} & \frac{x(x+2y)}{(xy+y^{2})^{2}}
\end{array}\right]\\
=\frac{1}{(x+y)^{2}}\left[\begin{array}{cc}
-1 & -1\\
-1 & \frac{x(x+2y)}{y}
\end{array}\right]
\end{multline}
It is easy to see that since $x\geq x_{0}$ and $y>y_{0}$, $\nabla^{2}f(x,y)+\rho\mathbf{I}_{2}\succeq\boldsymbol{0}$
for $\rho\geq\rho_{0}(x_{0}+y_{0})^{-2}$, which completes the proof.
\end{proof}
Lemma \ref{lem:lemma1} implies that the function $\log(1+\frac{x}{y})+\rho x^{2}+\rho y^{2}$
is convex for $\rho\geq\rho_{0}$. As a result, we can rewrite the
objective function in \eqref{eq:PCWSR:DC} as the following DC function
\begin{equation}
f(\mathbf{p},\mathbf{t})=h(\mathbf{p},\mathbf{t})-g(\mathbf{p},\mathbf{t})
\end{equation}
where $h(\mathbf{p},\mathbf{t})\triangleq\sum_{k=1}^{K}w_{k}\sum_{n=1}^{N}\rho_{kn}(p_{kn}^{2}+t_{kn}^{2})$
and $g(\mathbf{p},\mathbf{t})\triangleq\sum_{k=1}^{K}w_{k}\sum_{n=1}^{N}\rho_{kn}(p_{kn}^{2}+t_{kn}^{2})+\log\Bigl(1+\frac{p_{kn}}{t_{kn}}\Bigr)$
are both convex. The values of $\rho_{kn}$'s are determined based
on Lemma \ref{lem:lemma1}. In light of the SCA-based approach, a
convex upper bound of $f(\mathbf{p},\mathbf{t})$ is given by
\begin{multline}
f(\mathbf{p},\mathbf{t})\leq F(\mathbf{p},\mathbf{t};\tilde{\mathbf{p}},\tilde{\mathbf{t}})\triangleq h(\mathbf{p},\mathbf{t})-g(\tilde{\mathbf{p}},\tilde{\mathbf{t}})\\
-\bigl\langle\bigl[\nabla_{\mathbf{p}}g(\tilde{\mathbf{p}},\tilde{\mathbf{t}});\nabla_{\mathbf{t}}g(\tilde{\mathbf{p}},\tilde{\mathbf{t}})\bigr]\trans,\bigl[\mathbf{p}-\tilde{\mathbf{p}};\mathbf{t}-\tilde{\mathbf{t}}\bigr]\bigr\rangle\label{eq:logapp}
\end{multline}
which is a quadratic function. Consequently, the problem considered
in the $\theta$th iteration of the SCA-based approach is given by
\begin{subequations}\label{eq:mumc:convex}
\begin{align}
\underset{\mathbf{p},\mathbf{t}}{\mini} & \;\tilde{f}(\mathbf{p},\mathbf{t})\triangleq F(\mathbf{p},\mathbf{t};\mathbf{p}^{(\theta-1)},\mathbf{t}^{(\theta-1)})\\
\st & \;t_{kn}=\tilde{\sigma}_{kn}^{2}+\sum_{l\neq k}\tilde{G}_{kln}p_{ln},\,\forall k,n\label{eq:mcslackvariable}\\
 & \;\sum_{n=1}^{N}p_{kn}\leq\bar{p}_{k},\,\forall k;\;0\leq p_{kn}\leq\bar{p}_{kn},\,\forall k,n.
\end{align}
\end{subequations} Now we can substitute $t_{kn}$ by $\tilde{\sigma}_{kn}^{2}+\sum_{l\neq k}\tilde{G}_{kln}p_{ln}$
into the objective and omit \eqref{eq:mcslackvariable} to achieve
an equivalent conic formulation in which the optimization variable
is only $\mathbf{p}$. We remark that, in reality, the parameter $\rho_{kn}$
can be made smaller than the bound given above which in most cases
can accelerate the convergence of the SCA procedure.

\subsubsection{Proposed Solution for MaxminEEfair}

A solution to \eqref{eq:mumc:EEfair} can be obtained similarly. To
this end we first rewrite \eqref{eq:mumc:EEfair} as \begin{subequations}\label{eq:mumc:EEfair:reform}
\begin{align}
\underset{\mathbf{p},\mathbf{t},\vartheta>0}{\maxi} & \;\vartheta\\
\st & \;\frac{\sum_{n=1}^{N}t_{kn}}{\sum_{n=1}^{N}p_{kn}+p_{k}^{\mathrm{c}}}\geq\vartheta,\forall k\label{eq:mumceenonconvex}\\
 & \;u_{kn}\log\frac{u_{kn}}{\tilde{u}_{kn}}\geq u_{kn}t_{kn},\,\forall k,n\label{eq:mumceenoconvex2}\\
 & \;\sum_{n=1}^{N}p_{kn}\leq\bar{p}_{k},\,\forall k;\;0\leq p_{kn}\leq\bar{p}_{kn},\,\forall k,n.
\end{align}
\end{subequations} where $u_{kn}$ and $\tilde{u}_{kn}$ are defined
below \eqref{eq:PCWSR:gen}. The nonconvex constraints in \eqref{eq:mumceenonconvex}
can be approximated by \eqref{eq:App1}, and those in \eqref{eq:mumceenoconvex2}
can be handled in the same way as done for \eqref{eq:PCWSR:softSINR-1}.

The worst-case complexity of solving the SOCP problems in \eqref{eq:PCWSR:DC}
and \eqref{eq:mumc:EEfair:reform} by interior-point methods is $\mathcal{O}\left(K^{3.5}N{}^{3.5}\right)$.
On the other hand, the worst-case complexity estimate for the method
in \cite{Wang:SCALE:2012} (for solving the WSRmax problem) and the
one in \cite{Zappone:EEpowercontrol:2015} (for solving the maxminEEfair
problem) is $\mathcal{O}\left(K^{6}N^{6}\right)$\cite[chapter 5]{lecture_on_modernCO}.
This comparison shows a huge improvement of the proposed SOCP-based
approach over the existing solutions in terms of solution speed, which
is numerically confirmed in the following.

\subsection{Numerical Results}

To evaluate the solutions in this section, we adopt the simulation
model in \cite{Wang:SCALE:2012}. Specifically, the coordinates of
transmitter and receiver $k$ in meters are $(k,10)$ and $(k,0)$,
respectively. The noise variance at all subcarriers is $\sigma^{2}=-30$
dBm. The path loss attenuation is $\alpha d^{-3}$, where $\alpha$
is the log normal shadowing with the standard deviation of $3$ and
$d$ is the distance in meters. The multipath channels are modeled
with six taps (delay) which are circularly symmetric complex Gaussian
random variable with zero mean and variance vector as $[1\,e^{-3}\,e^{-6}\,e^{-9}\,e^{-12}\,e^{-15}]$.
The number of transmission links $K$ and the number of subcarriers
$N$ are specified in each experiment. The maximum transmit power
at the transmitters are $\bar{p}_{k}=32$ dBm, $\forall k$. To create
the initial points for the algorithms, we uniformly allocate power
to subcarriers such that \eqref{eq:mcpowerconstraint} is satisfied.

\subsubsection{Weighted Sum Rate Maximization}

\begin{figure}
\centering
\subfigure[Average weighted sum rate (in bits per second).]{\label{fig:sumratemcrate}\includegraphics[width=0.95\columnwidth]{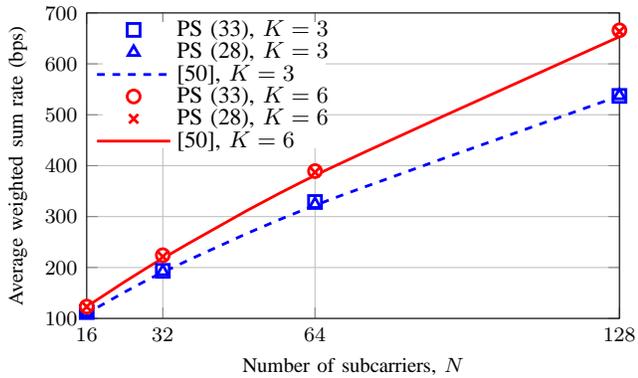}}\\
\subfigure[Average run time (in seconds).]{ \label{fig:sumratemctime}\includegraphics[width=0.95\columnwidth]{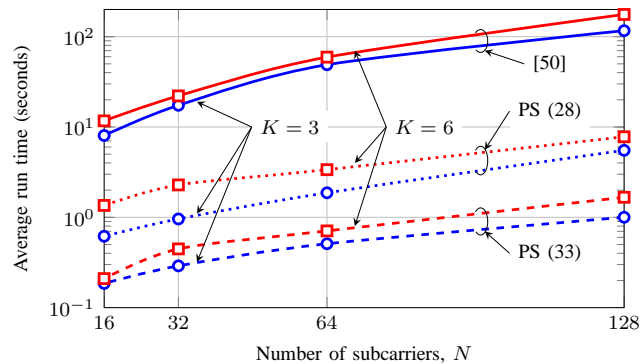}}

\caption{Average weighted sum rate and average run time of the proposed solutions
(PS \eqref{eq:PCWSR:gen} and PS \eqref{eq:mumc:convex}) and the
solution in \cite{Wang:SCALE:2012} for WSRmax problem with different
number of subcarrier $N$. The number of transmission links $K$ is
taken as 3 and 6. }
 \label{fig:sumratemcperf}
\end{figure}

We first compare the proposed solutions (PS \eqref{eq:PCWSR:gen}
and PS \eqref{eq:mumc:convex}) with the algorithm in \cite{Wang:SCALE:2012}
for the WSRmax problem. The iterative procedure in \cite{Wang:SCALE:2012}
stops when the increase in the weighted sum rate of 5 consecutive
iterations is less than $10^{-3}.$ The average weighted sum rate
performance of the considered methods are plotted in Fig. \ref{fig:sumratemcrate}
for different numbers of $N$ and $K$. As can be seen, these approaches
achieve nearly the same performance in all cases of $K$ and $N$.
The complexity comparison of the methods is shown in Fig. \ref{fig:sumratemctime}.
Clearly, the run time of the GP-based method in \cite{Wang:SCALE:2012}
scales very fast with $N$, compared to that of the SOCP-based methods.
Thus, the GP-based method requires prohibitively high computation
time for large $N$. In particular, when $(K,N)=(3,128)$, the proposed
solutions \eqref{eq:PCWSR:gen} and \eqref{eq:mumc:convex} are approximately
20 times and 100 times faster than the method in \cite{Wang:SCALE:2012},
respectively. Another expected result is that the proposed solution
\eqref{eq:mumc:convex} achieves better computational efficiency compared
to the proposed solution \eqref{eq:PCWSR:gen}. This gain comes from
the facts that the number of variables in \eqref{eq:PCWSR:gen} is
larger than that of \eqref{eq:mumc:convex}, and \eqref{eq:mumc:convex}
is a QP.

\subsubsection{Max-min Fairness Energy Efficiency}

\begin{figure}
\centering
\subfigure[Average minimum energy efficiency (in bits/mJ).]{\label{fig:eemcee}\includegraphics[width=0.95\columnwidth]{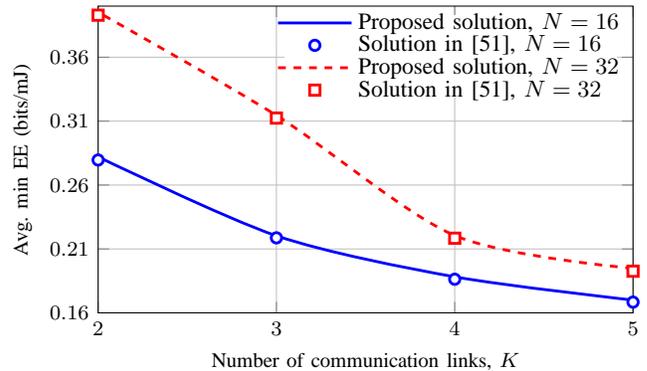}}\\
\subfigure[Average run time (in seconds).]{ \label{fig:eemctime}\includegraphics[width=0.95\columnwidth]{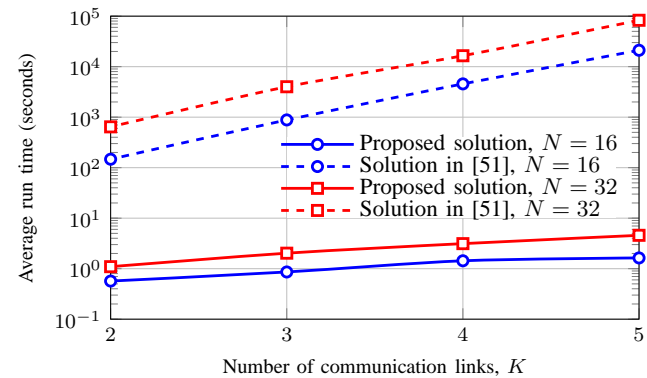}}

\caption{Average minimum energy efficiency and average run time of the proposed
solution and the solution in \cite{Zappone:EEpowercontrol:2015} for
maxminEEfair problem with different sets of $(K,N)$. The circuit
power $p_{k}^{\mathrm{c}}$ is $5$ dBm for all $k$.}
 \label{fig:eemcperf}
\end{figure}

For maxminEEfair problem, we compare the proposed solution with the
two-stage iterative method in \cite{Zappone:EEpowercontrol:2015}.
For a fair comparison, the stopping criterion of the two-stage iterative
algorithm is as follows. The tolerance error of the inner-stage (i.e.
Dinkelbach's procedure) is $10^{-3}$, and the outer-stage (i.e. the
SCA loop) stops when the increase of 5 consecutive iterations is less
than $10^{-3}.$ The solver for each subproblem of the two-stage iterative
algorithm is FMINCON, which is the general nonlinear solver included
in MATLAB's Optimization Toolbox.

The achieved minimum energy efficiency of the two methods in comparison
is shown in Fig. \ref{fig:eemcee}. In all cases of $(K,N)$, the
energy efficiency performance of the two approaches is the same, but
there is a huge difference in terms of computation time as shown in
Fig. \ref{fig:eemctime}. Again, we can see that the run time of the
proposed solution is much less sensitive to the dimension of the problems,
compared to the existing method. In particular, the proposed solution
is about $10^{4}$ times faster than the method in \cite{Zappone:EEpowercontrol:2015}
for $K=5$ and $N=32$.

\section{Conclusion and Discussion \label{sec:Conclusion}}

We have proposed several conic quadratic approximations for wireless
communications designs in the context of the SCA paradigm and applied
these to solve various design problems, including AF beamforming,
cognitive multicasting, MIMO relaying, and multicarrier power controlling.
For some specific problems, modifications and customization have been
made to improve the solution efficiency. Numerical results have shown
that the proposed approximations are far superior to the existing
methods in terms of solution speed, while still achieve the same design
objective.

For future work, it is interesting to investigate the global optimality
of the computed solutions of different approximations proposed in
this paper. Another research direction can be finding a flexible and
efficient way to switch between different approximations during the
iterative process to speed up the convergence. \bibliographystyle{IEEEtran}

\end{document}